\pgfplotsset{compat=1.13}
\newtheorem{lemma}{Lemma}
\newtheorem{corollary}{Corollary}
\newtheorem{claim}{Claim}
\newtheorem{example}{Example}
\newcommand{\ut}[2]{u\left( #1 \rightarrow #2 \right)}
\newcommand{\w}[2][]{w_{#1} \left( #2 \right)}
\newcommand{\p}[1]{p\left( #1 \right)}
\newcommand{\pb}{\mathbf{p}} 
\newcommand{\vstar}[1]{v^*_#1}
\newcommand{\defeq}{\stackrel{\triangle}{=} }
\renewcommand{\path}{\mathcal{P}}
\newcommand{\icrit}{i^*}
\newcommand{\iend}{\hat{i}}
\newcommand{\total}{k}
\newcommand{\integers}{\mathbb{Z}}
\newcommand{\reals}{\mathbb{R}}
\DeclareMathOperator*{\E}{\mathbb{E}}
\DeclareMathOperator*{\Prob}{\mathbb{P}}
\DeclareMathOperator{\Rev}{Rev}
\newcommand{\anote}[1]{{\color{red} {(\sf Alex's Note:} {\sl{#1}}     {\sf )}}}
\newenvironment{talign}
 {\let\displaystyle\textstyle\align}
 {\endalign}
\newenvironment{talign*}
 {\let\displaystyle\textstyle\csname align*\endcsname}
 {\endalign}
\title{Optimal Multi-Unit Mechanisms with Private Demands}
\author{Nikhil R. Devanur 
\\ Microsoft Research 
\and
Nima Haghpanah
\\ Penn State University
\and
Christos-Alexandros Psomas
\\UC Berkeley
}
\date{}
\begin{document}

\maketitle

\begin{abstract}
		In the multi-unit pricing problem, multiple units of a single item are for sale. A buyer's valuation for $n$ units of the item is $v \min \{ n, d\} $, where the per unit valuation $v$ and the capacity $d$ are private information of the buyer. We consider this problem in the Bayesian setting, where the pair $(v,d)$ is drawn jointly from a given probability distribution.  In the \emph{unlimited supply } setting, the optimal (revenue maximizing) mechanism is a pricing problem, i.e., it is a menu of lotteries. In this paper we show that under a natural regularity condition on the probability distributions, which we call {\emph{decreasing marginal revenue}}, the optimal pricing is in fact \emph{deterministic}. It is a price curve,
		offering $i$ units of the item for a price of $p_i$, for every integer $i$. Further, we show that the revenue as a function of the prices $p_i$ is a \emph{concave} function, which implies that the optimum price curve can be found in polynomial time. This gives a rare example of a natural multi-parameter setting where we can show such a clean characterization of the optimal mechanism. 
We also give a more detailed characterization of the optimal prices for the case where there are only two possible demands. 		
		 
\end{abstract}


\section{Introduction}
%

We study a pricing problem that is motivated by the following examples. 
A cloud computing platform such as Amazon EC2 sells virtual machines to clients, each of who needs a different number of virtual machine hours. 
Similarly, cloud storage providers such as Dropbox have customers that require different amounts of storage. 
Software companies such as Microsoft sell software subscriptions that can have different levels of service. 
The levels could be the number of different documents you are allowed to create, or the number of hours you are allowed to use the software. 
Companies like Google and Microsoft sell API calls to artificial intelligence software such as face recognition, to other software developers. 
Video and mobile games are increasingly designed in such a way that one can pay for better access to certain features. 
Spotify and iTunes sell music subscription, and different people listen to  different number of songs in a month.
Cellphone service providers like AT\&T and Verizon offer cellular phone call minutes and data. People have widely varying amounts of data consumption. 
Dating apps provide paid services where certain number of messages sent by a client can be ``promoted". 

Pricing is an important component in all these examples. The aim of this paper is to understand how to price such goods, for a monopolist seller who aims to maximize her revenue. The following common features (to a first degree of approximation) of these examples are crucial for our model. 
\begin{itemize}
	\item The marginal cost of offering a higher level of service is essentially a constant (and in many cases zero). Most of the cost is a fixed cost. 
	\item The valuations for the different levels of service are roughly linear, subject to a cap. 
\end{itemize}
Based on this, we consider the following problem. There is a single good with multiple units of it for sale. Equivalently, there is a single service with various levels of service. 
For ease of presentation we simply refer to `goods' and `units' from now on. 
The marginal cost to the seller for procuring another unit of the good is a constant. There is a population of buyers, each of who has a {\em linear} valuation for 
consuming a number of units of the good, subject to a \emph{cap} (which we refer to as \emph{demand} henceforth). 
The type/private information of a buyer is determined by the per-unit valuation and the demand.  
Buyers try to maximize their utility which is quasi linear, i.e., the valuation minus their payment. 
The question is what is the revenue maximizing pricing scheme.  

The standard approach in  mechanism design \cite{Myerson} is Bayesian: 
assume that the types are drawn from a given distribution and find/characterize the incentive compatible (IC) mechanism
that maximizes the expected revenue when the buyer types are drawn from this distribution. 
The optimization is over randomized mechanisms, which in our case corresponds to pricing lotteries. 
Here's a simple example that shows that lotteries can obtain better revenue than any deterministic pricing scheme. 
We represent the type of a buyer with a pair $(v,d)$, where $v$ is the per unit valuation and $d$ is the demand. 
\begin{example}  [Deterministic pricing is not optimal]
	\label{eg:randopt}
	Suppose that there are 3 types of buyers, all occurring with equal probability $\frac{1}{3}$. 
These types are $t_1 = \left( 1 , 3 \right)$, $t_2 = \left( 1 , 2 \right)$ and  $t_3 = \left( 6 , 1 \right)$.
Consider the  lottery (which happens to be optimal for this case) that offers the following options: 
\begin{enumerate}
	\item  3 units at a price of 3, or
	\item a lottery that gives $2$ units at a price of 2 with probability $\frac{3}{4}$, and nothing otherwise. 
\end{enumerate}   
Buyers of type $t_1$ and $t_3$ buy 3 units, where as buyers of type $t_2$ buy the lottery, 
for a total expected revenue of $\frac{7.5}{3}$. 
Consider the two deterministic prices that are in the support of the lottery. 
The first one offers 3 units at a price of 3 and 2 units at a price of 2.
In this case, a type $t_3$ buyer will  switch to buying 2 units instead of 3 since her demand is only 1. 
Thus you get a revenue of $\frac{7}{3}$. 
The $1/4$ probability of not getting anything in the lottery makes the $t_3$ buyer pay for 3 units. 
The other price in the support of the lottery only offers 3 units at a price of 3. 
Buyers of type $t_1$ and $t_3$ buy this option whereas $t_2$ will not buy anything, 
resulting in a revenue of $\frac{6}{3} = 2$. 
It can in fact be argued that the revenue of $\frac{7}{3}$ is optimal for deterministic prices.\footnote 
{Any deterministic mechanism will have $3$, possibly distinct, prices for each possible number of units. Consider the price per unit for $2$ units $q_2$, and the price per unit for $3$ units $q_3$. If both $q_2$ and $q_3$ are strictly bigger than $1$ then $t_1$ and $t_2$ will not buy; the revenue in this case is at most $\frac{6}{3} = 2$. 
If $q_3 \leq 1$ and $q_2 > 1$, then $t_2$ will not buy (the price for one unit is larger than $q_2$, so buying one unit is not an option); in this case the maximum price we could charge for one unit is $3$, and therefore the maximum revenue attainable is $\frac{6}{3} = 2$. If $q_3 \leq 1$ and $q_2 \leq 1$ setting them equal to $1$ only increases revenue, for a maximum of $\frac{7}{3}$. The last case ($q_3 > 1, q_2 < 1$) is infeasible.
}
\end{example} 

It appears that the optimal mechanism is usually randomized for small examples with discrete support. 
This phenomenon is quite common. 
While \citet{Myerson} showed that the optimal mechanism for single dimensional types is deterministic 
under quite a general assumption about the prior distribution called regularity, 
even slight multi dimensional generalizations end up in randomized mechanisms as optimal \cite{thanassoulis2004haggling,pavlov2011optimal,hart2015maximal}. 
However, practical considerations force a seller to stick to deterministic mechanisms for the most part. 
(This is true for all the applications listed above.)
Moreover, the optimal randomized mechanism sometimes doesn't even have a finite description \cite{DaskalakisDT13}. 
Hence it is important to understand the structure of the optimal deterministic mechanism. 
In this paper we offer two insights in this regard. 

\textbf{Out first contribution is to identify a natural condition that guarantees that 
the optimal (randomized\footnote{We use the convention that the optimal mechanism is always randomized. When we wish to restrict ourself to deterministic mechanisms, we will use ``optimal deterministic mechanism/pricing''.}) mechanism is deterministic}. 
We call the condition we need as \emph{decreasing marginal revenue} (DMR), in accordance with previous literature \cite{che1998standard}.
Regularity requires that the \emph{virtual value function} is monotone, or equivalently, that the revenue function is  concave in the \emph{quantile} space.
DMR instead requires that \emph{in the value space}, the marginal revenue is decreasing or equivalently that revenue function is concave. In other words, a probability distribution with CDF $F$ is DMR, 
if the function $v(1-F(v))$, specifying the expected revenue of posting a price $v$, is concave. The condition we need for the optimal pricing to be deterministic is that the marginal distributions for $v$, 
conditioned on a given demand, are all DMR.  We will provide a more detailed analysis of the DMR condition below. We also give a detailed description of the optimal prices in case there are only two distinct demands in the distribution. 
We note that the case of 2 distinct levels of service is quite common (e.g., {\em limited} and {\em premium}). 

\paragraph{Closely Related Work}
\citet{malakhov2009optimal} consider the same problem (more generally in the multiple bidder case), 
but made 2 strong assumptions: 
(1) that the buyers cannot report a higher demand, and  
(2) that the distribution satisfies the following: the Myerson virtual value\footnote{
	The Myerson virtual value given a distribution with CDF $F$ and PDF $f$ is $\phi(v) = v - \frac{1-F(v)}{f(v)}$. 
	In our case, we define the virtual value of a type $(v,d)$ by applying the same definition using the  marginal distribution on  $v$, conditioned on $d$, and denoted by $F_d$ and $f_d$.
	$\phi(v,d) = v - \frac{1-F_d(v)}{f_d(v)}$. } is monotone in both the value and the demand. 
This essentially results in the problem separating out into a 1 dimensional problem, one for each $d$. 
The non-triviality in the 2 dimensional problem comes because buyers can misreport their demands. 
The first assumption disallows reporting a higher demand.  
The second assumption makes reporting a lower demand never profitable, without having to do anything extra.
When specialized to the case of a single buyer, it implies that a deterministic pricing is optimal, since the same is true for the 1 dimensional case. 

The recent work of \citet{fiat2016fedex} solves the single buyer problem, with only the first assumption above, that 
buyers cannot report a higher demand. 
This is a significant improvement, since the second assumption above, which requires something quite strong about the correlation between value and demand, is the more problematic one. 
\citet{fiat2016fedex} consider what they call the ``FedEx'' problem, 
which too has a 2 dimensional type space, where one of them is a value $v$, and the other  is a ``deadline'' $d$.  The seller offers a service, such as delivering a package, at various points of time, 
and the buyer's valuation is $v$ for any time that is earlier than her deadline $d$. 
In their model, a higher $d$ corresponds to an inferior product, as opposed to our model where higher $d$ is superior. 
The other difference is that in their model, all times earlier than $d$ have the same valuation and times later than $d$ have a zero valuation, whereas in our model, the valuation stays the same for higher $d$s but degrades gracefully as $d$ decreases. 

Despite these differences, the relevant IC constraints are \emph{syntactically} identical. 
As was also observed earlier by \citet{malakhov2009optimal}, without loss of generality, one can reduce the set of IC constraints under consideration to only ``local'' constraints, 
such as the ones where a buyer of type $(v,d)$ reports  $(v,d-1)$.
This IC constraint is exactly the same for both our problem and the FedEx problem. 
This is surprising because, as we observed above, what $d-1$ means in both cases is semantically different. (See Section \ref{sec:structure} for an explanation.)
On the other hand, a buyer with deadline $d$ can be made to never report a $d'>d$, by making sure that she is always given the service at her reported deadline. 
Thus, the FedEx problem is the same as our problem, with the assumption that the buyers are not allowed to report a higher demand. 
\citet{fiat2016fedex} characterize the optimal mechanism, without any assumptions on the prior distribution. 

\paragraph{Comparison.} We do not make the assumption that the buyers cannot report higher demands. 
Consider the case that there are just 2 different $d$s in the distribution, with $d_1 < d_2$, 
and the question, when is it optimal to offer each level of service 
at the monopoly reserve price (say, $r_1$ and $r_2$ resp.) for the corresponding marginal distributions over values. 
The answer for the FedEx problem is, when $r_1 \geq r_2$, which just says that $d_1$ should cost more than $d_2$. 
In our case, the answer is that $r_1 \leq r_2$ and $r_1 \geq \frac{d_1}{d_2} r_2$. 
Clearly $d_1$ units should cost less, but not too low either, since in that case some buyers with demand $d_2$ 
will actually prefer $d_1$ units.
This points to the added difficulty in our problem: we need to worry about a buyer opting for a bundle that could be of any size, but in the FedEx problem a buyer would never consider later time slots.  
In addition, the new IC constraints we need to consider are of the form where $(v,d)$ reports $(\frac d {d+1} v, d+1)$. 
These are ``diagonal'' IC constraints, as compared to the ``vertical'' ones in the FedEx problem, 
where a buyer of type $(v,d)$ reports  $(v,d-1)$. 
These are harder to handle and the techniques used in the FedEx problem, such as constructing an optimal dual, seem difficult to extend to this case.

\paragraph{The DMR Condition}
We are not the first to make this assumption: \citet{che1998standard} made the exact same assumption for a very similar problem, of selling a single item to a single buyer with budget constraints, rather than demand or capacity constraints. The optimal mechanism there could still be randomized. 
\citet{fiat2016fedex}  too show that the DMR condition is more natural than the usual notion of regularity for their setting. 
In particular, they show that to derive the optimal mechanism, one needs to \emph{iron}\footnote{Ironing is a technique introduced by \cite{Myerson} where the virtual value function is transformed so that it becomes monotone. This corresponds to transforming the corresponding revenue function into a concave function.} in the value space, 
rather than the quantile space as usual. 
DMR is precisely when no ironing is needed in the value space. 
As a result they too obtain that the optimal mechanism is deterministic under DMR. 
The same assumption was also made by \citet{kleinberg2003value} in the context of \emph{dynamic pricing}; see Section \ref{sec:concavityintro} for more discussion on dynamic pricing.

A simple class of DMR distributions is Uniform$[a,b]$ for any non-negative reals $a$ and $b$.  More generally, any distribution with finite support and monotone non-decreasing probability density is DMR.\footnote{The second derivative of the revenue function is $-2f(v) - vf'(v)$, which is negative if $f'(v)\geq 0$.} 
Another standard class of demand distributions that satisfies DMR is a \emph{constant elasticity} distribution.\footnote{As the name suggests, the elasticity of demand for such a distribution is constant over the support.  Such distributions are commonly used in Industrial Organization since they can be easily estimated by measuring elasticity anywhere on the support \cite{wolfstetter1999topics,berry1995automobile}.}   (See Example \ref{ex:regularity comparisons} for the definition.) 
The DMR condition is different from the regularity condition of \citet{Myerson}, which requires that the function $\phi(v) = v - \frac{1-F(v)}{f(v)}$ is monotone non-decreasing in $v$.  The example below shows that DMR and regularity are incomparable conditions.

\begin{example}[DMR vs. regularity]\label{ex:regularity comparisons}
Consider the class of \emph{constant elasticity} distributions with cumulative density $F(v) = 1 - (v/a)^{1/\epsilon}$ for any for $a\geq 0$ and $\epsilon < 0$, supported on $[a,\infty)$. A special case is when $a = 1$ and $\epsilon = -1$, in which case $F(v) = 1-1/v$, known as the \emph{equal revenue distribution}.  The corresponding revenue function $v(v/a)^{1/\epsilon}$ is concave if $\epsilon \leq -1$.  However, the function $\phi(v) = v - \frac{1-F(v)}{f(v)}$ simplifies to $v (1+\epsilon)$, which is monotone \emph{decreasing} for $\epsilon <1$.	 Therefore such a distribution is DMR, but not regular, for $\epsilon < -1$.  On the other hand, the exponential distribution is regular but not DMR.  Calculations for this example are straightforward and deferred to Appendix~\ref{app:appendix}.
\end{example}

The class of DMR distributions is well-behaved in the sense that it is closed under convex combinations.   In particular, the distribution that results from drawing a sample from a DMR distribution with probability $\alpha$, and from another DMR distribution	 with probability $1-\alpha$, is a DMR distribution.\footnote{The cumulative density of a distribution that samples from $F_1$ with probability $\alpha$, and from $F_2$ otherwise, is $F(v) = \alpha F_1(v) + (1-\alpha) F_2(v)$.  Therefore, the revenue function of the convex combination is the convex combinations of the revenue functions of $F_1$ and $F_2$, and is concave if $F_1$ and $F_2$ are DMR.}  On the other hand, it is known that regular distributions are not closed under convex combinations \cite{sivan2013vickrey}.

We show that the DMR condition is necessary, by giving a distribution with monotone hazard rate\footnote{The function $\frac{1-F(v)}{f(v)}$ is monotone non-increasing}, a condition stronger than regularity, for which a deterministic pricing is not optimal. 
\begin{example} [MHR distributions where deterministic pricing is not optimal]
The marginal distributions of Example~\ref{eg:randopt} for $d=1,2$ and $3$ are point masses at $6$, $1$ and $1$ respectively. Replace them with normal distributions $\mathcal{N}(1-\epsilon,\sigma)$, $\mathcal{N}(1-\epsilon,\sigma)$ and $\mathcal{N}(6-\epsilon,\sigma)$, truncated at $0$ and $V$, for some $V > 6$, and some $\epsilon > 0$. Truncated normal distributions satisfy the monotone hazard rate condition. For any $\delta > 0$, we can choose $\sigma$ and $\epsilon$ small enough, such that the revenue of the optimal deterministic and randomized mechanisms from Example~\ref{eg:randopt} changes by less than $\delta$. Furthermore, running these mechanisms on the new distributions yields essentially the same revenue.
\end{example}

\paragraph{Our Approach.} Our approach is to show that any mechanism can be converted to a deterministic one with higher revenue, which we perform in two steps.  First, we convert a mechanism so that a type with demand $d$ and with highest valuation receives a deterministic allocation of $d$ units, without reducing revenue.  In order to do so, we first argue that without loss of generality, any type $(v,d)$ is assigned a lottery over $d$ units or no allocation (that is, there is no chance of receiving $d' \neq d$ units).  Then we show that the randomized allocation of highest values can be converted to a deterministic allocation, without reducing revenue. Our first step holds generally and does not require the DMR condition. Second, we argue that a mechanism resulting from the first step can be converted to a deterministic mechanism.  In particular, we remove all non-deterministic allocations from the mechanism, and allow types to choose only among the remaining deterministic allocations.  Removal of allocations can only decrease (or keep fixed) the utility function of the mechanism pointwise.  However, since the highest type of each demand was assigned a deterministic allocation, the utility of such a type remains unchanged.  A technical lemma shows that under the DMR condition, we can improve revenue by pointwise lowering utility whilst fixing the utility of highest types.

\subsection{Concavity of the revenue function}\label{sec:concavityintro} 
Our first result implies that the optimal pricing scheme is a price vector, which offers each number of units for a given price. 
\textbf{Our second contribution is to show that the revenue as a function of the price vector is concave}, under the 
same assumption of DMR. 
This implies that the optimal prices can be found efficiently 
using the ellipsoid or other cutting plane methods \cite{khachiyan1980polynomial,vaidya1989new,lee2015faster}. 
Note that DMR is a condition on the marginal distributions of values, and does not immediately imply concavity as a function of the vector of prices.  Note also that when we define concavity, we consider a deterministic pricing scheme where the price vector is a convex combination of two other deterministic prices, and not the corresponding lottery. This is best illustrated with the same instance as in Example \ref{eg:randopt}, which also shows that the revenue function need not always be concave. 
\begin{example}[Revenue function is not concave]
	Consider the instance from Example \ref{eg:randopt}, and the convex combination of the two prices in the support of the lottery, using the same convex combination of $(3/4, 1/4)$ as before. Recall that the first price vector is 3 units at price 3 and 2 units at price 2 for a revenue of 7, and the second is a price of 3 for either 2 or 3 units, for a revenue of 6. 
	The convex combination offers 3 units at a price of 3, and 2 units at a price of $9/4$, 
	with a revenue of 6. 
	The corresponding convex combination of the revenues is strictly larger than 6, and hence the revenue function is not concave. 
	\end{example}	

\paragraph{Techniques and Difficulties}
In order to show that the revenue function is concave, we first give a closed form formula for the revenue function \emph{region-wise}. 
We divide the price space into different regions such that a region determines the order in which a buyer with a certain demand actually 
ends up buying a lower sized bundle. For instance, a region might determine that for all the buyers with demand 10, as their value 
decreases from $\infty$ down to 0, the bundle size they actually buy goes from 10 to 7 to 3 to 0; 
the exact transition points of course depend on the prices. 
We then show that the closed form formula for each of the regions is a concave function, implying that the revenue is piecewise concave. 
This in general does not imply that the revenue function is concave everywhere. 
One might surmise that the revenue function is the minimum of each of these functions, which would show that it is concave everywhere, 
but that is unfortunately not true. 
In fact, there is a partial order over these functions such that some of them are always higher than the others. 
We show a somewhat surprising property, that at the boundaries of the regions where they intersect, not only do the different functions 
agree (which they should, for the revenue function to be even continuous), but also their gradients agree! 
Showing this involves arguing that the equalities that hold at an intersection imply a whole set of other equalities 
such that disparate terms in the two gradients cancel out.

\paragraph{Dynamic Pricing}
As a corollary, we obtain that under the DMR assumption, there is an efficient \emph{dynamic pricing} scheme, defined as follows.
Consider a repeated setting where in each round  $\tau \in \{1,2,\ldots, T\},$ 
the seller posts a price vector $\pb^{\tau}$, a buyer is drawn from a fixed distribution, and buys her utility maximizing bundle. 
The seller does not know the distribution of buyer types, and has to only use the purchase information in previous rounds to set the price. 
The goal is to approach the optimal revenue as $T$ goes to infinity. 
Given that the distribution satisfies the DMR assumption, our result on the concavity of the revenue curve implies that this is a special case of the ``convex bandits'' problem \cite{agarwal2011stochastic,bubeck2016kernel}. 
The results of \citet{bubeck2016kernel} imply that there exists a dynamic pricing scheme such that the average revenue per round 
converges to the optimal revenue at the rate of 
$ \frac{n^{9.5}}{\sqrt{T}}$, where $n$ is the number of units.
These bounds are quite strong, since the best known bounds for the dynamic pricing problem in general scale 
\emph{exponentially} in $n$; 
the concavity of the revenue function is an assumption often made to escape this curse of dimensionality \cite{besbes2012blind,talluri2006theory}. 
We show that this assumption can be weakened to an assumption about the concavity of only the 1 dimensional revenue functions for each $d$. 
The same assumption was made by \citet{kleinberg2003value} to get a $1/\sqrt{T}$ regret for the case of a \emph{single} item. 

\subsection{Other Related work}
%

The seminal work of \citet{Myerson} settled the optimal mechanism design problem for selling to multiple buyers with single parameter type spaces. 
Since then, it has been discovered that multi-dimensional type spaces are a lot more difficult to analyze, and this remains to this day the 
foremost challenge in mechanism design. 
The optimal mechanism becomes randomized for even slight generalizations \cite{thanassoulis2004haggling,pavlov2011optimal,hart2015maximal}. 
Following \citet{Myerson}, some early work solved very special cases of this \cite{laffont1987optimal,mcafee1988multidimensional}.  
\citet{manelli2006bundling} showed conditions under which bundling all the items was optimal when there were either 2 or 3 heterogeneous items. 
Success with reasonably general settings had been limited. 

There has been a recent spate of results in the algorithmic game theory community characterizing optimal mechanisms for special cases, and all of these consider a single buyer. 
\citet{DaskalakisDT13} use optimal transport theory to give sufficient conditions of optimality for an additive buyer with independent item valuations: when ``selling only the grand bundle'' is optimal and examples where a continuum of lotteries is the unique optimal mechanism.
\citet{GiannakopoulosK14} identify a (deterministic) optimal auction for an additive buyer whose valuations are i.i.d. from $U[0,1]$, 
for up to 6 items. 
\citet{HaghpanahH14} identify conditions under which either ``selling only the favorite item''  for a unit-demand buyer or selling only the grand bundle for an additive buyer is optimal. 
\citet{DaskalakisDT15} identify necessary \emph{and} sufficient conditions for selling only the grand bundle to be optimal for an additive buyer. 
The FedEx problem \cite{fiat2016fedex} that we described earlier also falls in this line of work. 
Our paper contributes to this line of work by identifying a reasonably general setting where the optimal mechanism is in fact deterministic, and can be computed efficiently. 
All these results use linear or convex program duality, to  construct a witness (dual optimal solution) of optimality. 
We also frame our problem as a mathematical program, but argue about the primal directly, which we find gives more intuition.\footnote{We did try to construct the optimal duals explicitly, but were not able to construct such duals in general. Constructing such duals is likely to facilitate characterizing the optimal mechanism for all distributions.}

The lack of characterizations of optimal mechanisms in general settings has been addressed by 
seeking computational results instead.  
(We refer the reader to \citet{hartline2013mechanism} for a thorough overview of this line of work.)
A sequence of papers by \citet{CaiDW12a,CaiDW12b,CaiDW13a,CaiDW13b} showed that for finite (multi-dimensional) type spaces, 
the mechanism design problem can be reduced to a related algorithm design problem, thus essentially resolving the computational question for this case. 
Most of these assume a finite support and the computation time is polynomial in the size of the support. 
This is different from our model which assumes a continuous distribution. 

Yet another approach to cope with the complexity of optimal mechanisms has been to show that simple  auctions approximate optimal ones. 
In this line of work, two classes of valuations have been widely studied, unit demand valuations \cite{ChawlaHK07,BriestCKW10,ChawlaHMS10,ChawlaMS10,Alaei11}, 
and additive valuations \cite{HartN12,li2013revenue,BabaioffILW14,Yao15}. A unified approach to both has been presented in \citet{cai2016duality}, 
and these approaches have been extended to more general valuations in \citet{RubinsteinW15,chawla2016mechanism,CaiZ16}. 
Most of these make some sort of assumption about independence of values for different items. 
Our model differs in this aspect: either we see it as a special case of a unit demand problem (each buyer wants one of several bundles) 
in which case the values are highly correlated, or as a problem with 2 dimensional type space $(v,d)$, and we allow arbitrary correlations between the $v$ and $d$. 
Also, the goal in our paper  is a characterization of the optimal mechanism as opposed to identifying simple but approximately optimal mechanisms.


%

\section{Model and Main Results}\label{sec:model} 
We consider a {\em multi-unit mechanism} with a single buyer with {\em private} demand. 
In a multi-unit mechanism, there are infinitely many units of a single item for sale. 
The type $t$ of a buyer is specified by her per unit value $v\in \reals_+$ and her demand $d\in \integers_+$. 
The valuation of such a buyer for $m\in \integers_+$ units of the item is $v * \min \left\{ m , d \right\}$. 
Both $v$ and $d$ are private information of the buyer, making this a multi-parameter setting. 

We restrict our attention to direct revelation mechanisms, which ask the buyer to report her type $t=(v,d)$. 
The mechanism is allowed to be randomized, so the output is an allocation $A \in \integers_+$ and a payment $P\in \reals_+$, both of which are random variables (and functions of the reported type $(v,d)$). 

We require the mechanism to be incentive compatible, in expectation over the randomization of the mechanism. Formally, a mechanism is said to be EIC if for all valid types $(v,d)$ and $(v',d')$,  the utility of the type $(v,d)$ from reporting its type truthfully is at least the utility it would get from reporting type $(v',d')$,
\[ \textstyle \E \left[   v\left(\min \left\{ A(v,d) , d \right\} - \min \left\{ A(v',d') , d \right\} \right) -  
P(v,d) + P(v',d') \right]  \geq 0 ,\]
where the expectation is taken over the randomization of the mechanism. 
We assume that $(0,0)$ is a always a valid type declaration, so this includes as a special case, an 
expected individual rationality (EIR) condition, which requires that each type must get a non-negative utility from reporting its type truthfully
\begin{align*}
\textstyle \E \left[   v \min \left\{ A(v,d) , d \right\} -  
P(v,d) \right]  \geq 0.
\end{align*}
\noindent By linearity of expectation, we may assume w.l.o.g. that the payment is deterministic, and we denote this deterministic payment by $p(t)$.   A stronger notion of individual rationality is \emph{ex-post} individual rationality, which requires that the utility of a type is positive for any randomization of the mechanism.  However, in the lemma below we show that any EIR mechanism can be converted to an \emph{ex-post} individually rational mechanism which guarantees positive for any randomization of the mechanism.  The argument is standard and is deferred.  As a result of the lemma, we will only focus on the EIR constraint in what follows. (All the missing proofs in the rest of the paper are in Appendix~\ref{app:appendix}.)

\begin{restatable}{lemma}{expost}
For every EIC and EIR mechanism, there exists an EIC and ex-post IR mechanism with the same expected payment \emph{for any type}.
\end{restatable}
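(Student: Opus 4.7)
The plan is to show that any expected-IR mechanism can be converted to an ex-post IR mechanism by making the payment state-contingent: scale down the payment in each realization of the random allocation so that it never exceeds the realized value, while keeping the expected payment unchanged. Since the allocation rule and the expected payment (per report) are left untouched, the EIC property is automatically preserved.

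More concretely, I would start from an EIC, EIR mechanism with random allocation $A(v,d)$ and deterministic payment $p(v,d)$, and define a new mechanism with the same allocation rule $A(v,d)$ but a payment that depends on the realized number of units. For a reported type $(v,d)$ with realized allocation $a$, set
\[
\tilde{p}(v,d,a) \;=\; \alpha(v,d)\cdot v\min\{a,d\}, \qquad \alpha(v,d) \;=\; \frac{p(v,d)}{v\,\E[\min\{A(v,d),d\}]},
\]
with $\tilde{p}\equiv 0$ on the degenerate cases where $v=0$ or $\E[\min\{A(v,d),d\}]=0$ (in which case EIR forces $p(v,d)=0$ anyway). EIR of the original mechanism gives $\alpha(v,d)\in[0,1]$, so in every realization
\[
\tilde{p}(v,d,a)\;\le\;v\min\{a,d\},
\]
which is exactly the ex-post IR condition. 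At the same time, taking expectations over $a\sim A(v,d)$ yields $\E[\tilde{p}(v,d,A(v,d))] = \alpha(v,d)\cdot v\,\E[\min\{A(v,d),d\}] = p(v,d)$, so the expected payment from truthful reporting agrees with the original mechanism on every type.

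For EIC, observe that if the true type is $(v,d)$ and she reports $(v',d')$, her expected utility is $\E[v\min\{A(v',d'),d\}] - \E[\tilde{p}(v',d',A(v',d'))]$; the first term is identical to the corresponding term in the original mechanism, and the second equals $p(v',d')$ by the calculation above. Hence the expected utility of every (true, reported) pair is unchanged, so every EIC inequality carried over from the original mechanism continues to hold, and the expected payment extracted from each type is preserved.

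There is essentially no real obstacle here; the only thing to be careful about is the degenerate denominators and the fact that the new payment schedule depends on the realized allocation (not just the report), which is acceptable as long as expected payments from each report remain as before. This argument is standard, so in the paper it would appear as a short deferred proof, with the role of the lemma simply being to let the rest of the paper write EIR in place of ex-post IR without loss of generality.
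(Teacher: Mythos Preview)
Your proof is correct and follows essentially the same approach as the paper: make the payment proportional to the realized allocation value so that ex-post IR is guaranteed while expected payments (and hence EIC) are preserved. The only cosmetic difference is that the paper first normalizes the allocation so that $A(v,d)\le d$ and then sets $\tilde p = p(v,d)\cdot A(v,d)/\E[A(v,d)]$, whereas you fold the cap into the payment formula directly via $\min\{a,d\}$; after cancelling the $v$'s in your $\alpha(v,d)$, the two constructions coincide, and you are a bit more careful than the paper about the degenerate denominator cases.
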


When there are no supply constraints that bind across buyers, or equivalently there is a single buyer, an alternate interpretation of such a mechanism is as a \emph{menu of lotteries}. 
A lottery is a pair of a probability distribution over $\integers_+$ and a price, 
corresponding to the randomized allocation and payment. 
The buyer chooses the lottery that maximizes her expected utility from among a menu. 
In general this menu could be of infinite size. 
We call this the {\em multi-unit pricing} problem.

Consider a distribution over the type space, with a density function $f$. 
The \emph{Bayesian optimal} mechanism w.r.t. this distribution is the EIC (and EIR) mechanism that 
maximizes the expected revenue when the types are drawn from this distribution: 
\[ \textstyle \E_{t\sim f} \left[ p(t)\right]  . \]
Our goal is to characterize the Bayesian optimal mechanism. 
We make two assumptions: 
\begin{itemize}
	\item The support of the distribution in the demand dimension is finite.  We denote by $k$ the size of this support. In other words, there are $k$ different demands possible. 
	\item Let $f_d$ and $F_d$ denote the PDF and the CDF of the marginal distribution on values conditioned on the demand being $d$. Then $ v(1-F_d(v))$ is \textbf{concave} in $v$ for any given $d$. 
We call this property \emph{decreasing marginal revenue} (DMR). 
This is equivalent to the fact that $v f_d(v)  - {1-F_d(v)}$ is a non-decreasing function of $v$. 
This is closely related to the usual definition of regularity, which requires monotonicity of this function divided 
by $f(v)$. 
\end{itemize}

We now state our first main theorem.
\begin{restatable}{theorem}{deterministic}
\label{thm:deterministic}
	The Bayesian optimal multi-unit pricing with linear valuations, private demands, finitely many demands and DMR distributions is deterministic.
\end{restatable}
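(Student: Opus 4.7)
The plan is to execute the two-step reduction sketched in the ``Our Approach'' paragraph: first normalize the mechanism to a canonical form in which each lottery is supported on only two allocations and the top-value type of every demand is already deterministic; then restrict to the deterministic submenu and apply a DMR-based utility--revenue identity to argue that revenue does not fall.

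Step 1 (normalization; no DMR required). For each type $(v,d)$, I replace its lottery $L$ by the two-point lottery on $\{0,d\}$ that places mass $\E_L[A]/d$ on $d$; this preserves the useful allocation $\E_L[\min\{A,d\}]=\E_L[A]$ and hence the utility of $(v,d)$ itself, so the original payment can be kept and the revenue is unchanged. To check that global IC is preserved after simultaneously modifying every type's offer, consider any potential misreport from $(v',d')$ to the modified offer of $(v_\ast,d_\ast)$: for $d'\ge d_\ast$ the misreport utility is unchanged, while for $d'<d_\ast$ the monotone inequality $\min\{m,d'\}\ge (d'/d_\ast)\,m$ (valid whenever $m\le d_\ast$) shows the new misreport utility is weakly smaller. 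Next, I promote the top-value type $\bar v_d$ of each demand $d$ to the deterministic allocation of $d$ units at the payment that keeps its utility fixed; for values $v<\bar v_d$ the incentive to misreport to this offer weakly decreases, and cross-demand IC violations are avoided by performing the promotions in a coordinated order (or by a simultaneous payment adjustment across all offers).

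Step 2 (derandomization using DMR). From the normalized mechanism I discard every non-deterministic menu item and let each type pick its best remaining deterministic item, yielding new utilities $\tilde u_d(v)\le u_d(v)$ pointwise, with $\tilde u_d(\bar v_d)=u_d(\bar v_d)$ because the top-value type's prior choice was already deterministic and hence still available. Fix $d$, use Myerson's envelope in the $v$-direction to write $p_d(v)=v\,u_d'(v)-u_d(v)$, and integrate by parts to obtain
\begin{equation*}
\Rev_d \;=\; \bar v_d\,f_d(\bar v_d)\,u_d(\bar v_d)\;-\;\int_{0}^{\bar v_d} u_d(v)\,\bigl(2 f_d(v)+v\,f_d'(v)\bigr)\,dv .
\end{equation*}
Since $\frac{d^2}{dv^2}\bigl[v(1-F_d(v))\bigr]=-(2f_d(v)+v\,f_d'(v))$, the DMR hypothesis is precisely $2f_d+v f_d'\ge 0$. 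Hence lowering $u_d$ pointwise while holding $u_d(\bar v_d)$ fixed can only raise $\Rev_d$; summing over $d$ yields that the deterministic submenu weakly improves total revenue, which completes the proof.

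The main obstacle is the promotion of top-value types in Step 1: raising $\bar v_d$'s allocation to $1$ increases the appeal of the corresponding offer to any higher-valued type $(v',d')$ from a different demand class, so IC across demand classes must be re-examined carefully and a globally consistent sequencing of the promotions is needed. Step 2 is conceptually clean given the envelope identity, but for distributions with atoms or non-differentiable densities the identity should be recast in Riemann--Stieltjes form, where DMR continues to translate into the nonnegativity of the signed measure multiplying $u_d$.
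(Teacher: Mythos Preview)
Your proposal is correct and follows essentially the same two-step reduction as the paper: reduce each lottery to a two-point lottery on $\{0,d\}$, promote the top-value type of every demand to a deterministic allocation, then pass to the deterministic submenu and compare via the integration-by-parts identity (the paper writes your integrand $2f_d+vf_d'$ as $(\phi_d f_d)'$, which is the same thing).

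On the one step you flag as the main obstacle: no sequencing is needed, and the cross-demand worry dissolves once you use a common upper bound $\bar V$ on values (as the paper does). Promote all top types simultaneously by setting $\bar p_d=\bar V d-u_d(\bar V)$ for every $d$. For any type $(v',d')$ deviating to the new offer at $(\bar V,d)$, the deviation utility is
\[
v'\min(d,d')-\bar p_d \;=\; v'\min(d,d')-\bar V d\bigl(1-w_d(\bar V)\bigr)-p_d(\bar V)
\;\le\; v'\min(d,d')\,w_d(\bar V)-p_d(\bar V),
\]
using $\bar V d\ge v'\min(d,d')$ and $1-w_d(\bar V)\ge 0$. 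The right-hand side is the deviation utility in the \emph{original} mechanism, hence at most truthful utility, which is unchanged. Since there are no types with $v'>\bar V$, the ``higher-valued cross-demand'' case you worried about does not arise.
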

A deterministic mechanism is simply a menu with a deterministic allocation of each possible bundle of $d$ units, 
for $d$ in the support of $f$. Let $d_1 < d_2 < \cdots < d_k$ be the demands in this support. We denote 
the  prices for the corresponding bundles by  $p_1 , p_2, \ldots, p_k$. 
A buyer can get $d_i$ units by paying $p_i $ for any $i\in [k]$. 
A buyer with type $t = (v,d)$ chooses to buy the bundle that maximizes her utility
$ v \min\{ d, d_i\} - p_i.$
Let $\pb $ denote the vector of unit prices $(p_1, \ldots, p_k)$. 
We assume without loss of generality that the domain of $\pb$ is such that 
$p_1 \leq p_2 \leq \cdots \leq p_k$. 
We denote by $\Rev(\pb)$ the (expected) revenue of this mechanism.  
Our second main theorem is Theorem~\ref{thm:concavity}. Due to this theorem, the optimal mechanism can be found efficiently, since maximizing a concave function can be done in polynomial time. 
\begin{restatable}{theorem}{concavity}\label{thm:concavity}
	$\Rev(\pb)$ is a concave function if the marginal distributions are DMR for all $d$. 
\end{restatable}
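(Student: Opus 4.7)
The plan is to partition the feasible price domain $\{0 \le p_1 \le \cdots \le p_k\}$ into finitely many polyhedral regions on which $\Rev$ admits a closed form that is manifestly concave by DMR, and then to glue these pieces into a globally concave function by verifying that the local formulas match across shared facets not only in value but also in gradient.

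\textbf{Region-wise formula.} Fix a demand $d$ in the support and set $m_i = \min\{d, d_i\}$. A buyer of type $(v,d)$ picks the bundle on the upper envelope of the lines $y = v m_i - p_i$, so there is an \emph{active sequence} of indices $0 = j_0 < j_1 < \cdots < j_{r_d}$ with transition values $v_l^d = (p_{j_l} - p_{j_{l-1}})/(m_{j_l} - m_{j_{l-1}})$ such that bundle $j_l$ is purchased precisely when $v \in [v_l^d, v_{l+1}^d)$ (using $v_{r_d+1}^d = \infty$). I define a \emph{region} of the price domain to be a maximal set on which the active sequence is identical for every $d$ in the support; each region is polyhedral. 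An Abel-summation computation then gives
\[
\sum_{l=1}^{r_d} p_{j_l}\bigl(F_d(v_{l+1}^d) - F_d(v_l^d)\bigr) \;=\; \sum_{l=1}^{r_d} (m_{j_l} - m_{j_{l-1}})\, R_d\!\bigl(v_l^d(\pb)\bigr)
\]
for the demand-$d$ contribution, where $R_d(v) = v(1 - F_d(v))$. On the region the multipliers $m_{j_l} - m_{j_{l-1}}$ are positive constants and each $v_l^d$ is affine in $\pb$; DMR (concavity of $R_d$) then makes each demand's contribution concave on the region, so $\Rev$, a nonnegative combination of these contributions over $d$, is concave on the region.

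\textbf{Gradient matching across facets.} A generic codimension-one facet of the decomposition is crossed when a single bundle $s$ enters the active sequence of a single demand $d$, say between two active neighbors $j_{l-1}$ and $j_l$. On the facet, the three candidate thresholds $(p_{j_l} - p_{j_{l-1}})/(m_{j_l} - m_{j_{l-1}})$, $(p_s - p_{j_{l-1}})/(m_s - m_{j_{l-1}})$, and $(p_{j_l} - p_s)/(m_{j_l} - m_s)$ collapse to a common value $v^\star$. The two local formulas for $\Rev_d$ agree at the facet by the identity $m_{j_l} - m_{j_{l-1}} = (m_s - m_{j_{l-1}}) + (m_{j_l} - m_s)$ together with $R_d$ being evaluated at $v^\star$ on both sides. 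For the gradient, I would differentiate the two local formulas with respect to each $p_i \in \{p_{j_{l-1}}, p_s, p_{j_l}\}$, factor out the common $R_d'(v^\star)$, and verify by a short linear calculation that the coefficients agree (and vanish trivially for all other $p_i$); for example $\partial/\partial p_s$ gives $0$ on the region-1 side and $R_d'(v^\star) - R_d'(v^\star) = 0$ on the region-2 side. This is the ``surprising'' gradient agreement anticipated in the introduction.

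\textbf{Global concavity.} Since $\Rev$ is then piecewise concave and $C^1$ on its whole domain, its restriction to any line segment is a piecewise concave function of one variable with continuous derivative, hence has a nonincreasing derivative, hence is concave; therefore $\Rev$ is concave globally. \textbf{The main obstacle} is the gradient-matching step: the generic facet reduces to the three-bundle computation above, but one must also handle higher-codimension loci where several demands' active sequences change simultaneously, and boundary effects where a transition value $v_l^d$ meets the endpoint of $\mathrm{supp}(F_d)$. These cases are recoverable from the generic analysis by a limiting argument, since the region-wise formulas are continuous in $\pb$ and $R_d$ is concave wherever defined, so the nonincreasing-derivative conclusion along any line segment survives the limit.
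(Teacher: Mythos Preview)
Your approach is the same as the paper's in outline: partition the price domain into polyhedral regions, show region-wise concavity from DMR, verify that the region formulas and their gradients agree across shared facets, and conclude global concavity via a line-segment argument. The paper indexes regions by a single map $\sigma$ with $\sigma(i)\in\arg\max_{j<i}D_{i,j}$, where $D_{j,l}=(p_j-p_l)/(d_j-d_l)$; this is equivalent to your ``same active sequence for every $d$'' description because the active sequences are nested across demands (once a demand-$d_i$ buyer drops to bundle $j<i$, her subsequent choices coincide with those of a demand-$d_j$ buyer).

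Where you diverge is the facet analysis, and there the write-up has a gap. The paper proves a general lemma (their Claim~1): whenever $\pb\in\Delta_\sigma\cap\Delta_{\sigma'}$ with $\sigma,\sigma'$ differing at one coordinate $i^\star$, \emph{all} thresholds $D_{j,j'}$ along the divergent portions of the two paths from $i^\star$ coincide. This is proved inductively using the convex-combination identity $D_{i,l}=\tfrac{d_i-d_j}{d_i-d_l}D_{i,j}+\tfrac{d_j-d_l}{d_i-d_l}D_{j,l}$, and it makes both value and gradient matching collapse in one stroke. You instead treat only the three-bundle insertion and push everything else to a limiting argument. Two things need tightening. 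First, at a codim-1 facet it is \emph{not} a single demand whose sequence changes: because of nesting, every demand whose path passes through $i^\star$ changes simultaneously. Fortunately the change is the \emph{same} three-bundle insertion for each, so your per-demand computation still applies; but ``several demands changing'' is not higher codimension as you suggest. Second, and more importantly, you assert but do not prove that codim-1 facets are always three-bundle. This is true, but it needs the same convex-combination identity: if $\sigma(i^\star)=a$, $\sigma'(i^\star)=b$ with $a>b$ and $\sigma(a)=c\neq b$, then the facet equation $D_{i^\star,a}=D_{i^\star,b}$ together with the regional inequalities $D_{i^\star,a}\ge D_{i^\star,c}$ and $D_{a,c}\ge D_{a,b}$ forces the additional independent equality $D_{a,c}=D_{a,b}$, so $\Delta_\sigma\cap\Delta_{\sigma'}$ has codimension $\ge 2$ and these regions are not adjacent. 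Once you supply that argument, your plan goes through; without it, the ``generic facet'' claim is unjustified and the limiting argument does not cover the missing cases.
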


\paragraph{Dynamic pricing}
Consider the following online problem. In each round $\tau \in {1,2,\ldots, T}$, for some $T \in \integers_+$, 
the following takes place. 
\begin{enumerate}
	\item The seller posts a price vector $\pb^{\tau}$. 
	\item A buyer of type $(v^\tau, d^\tau)$ is drawn independently from the distribution $f$. 
	\item The buyer buys her utility maximizing bundle $x^\tau \in \arg\max_{\{i: d_i \leq d^\tau\}} v^\tau d_i - p^\tau_i   $. 
	\item The seller observes only $x^\tau$. 
\end{enumerate}
Assume, for the sake of notational convenience, that $d_0 = 0 $ and $p^\tau_0 = 0$ for all $\tau$, 
so  $x^\tau = 0$ when the buyer doesn't buy anything.
The goal of the seller is to maximize her average (or equivalently, total) revenue
\[  \frac 1 T \sum_{\tau=1}^{T}  p_{x^{\tau}}^\tau.  \]
We evaluate the performance of  a dynamic pricing scheme by its \emph{regret}, which is 
the difference between the optimal expected revenue and the average expected revenue of the pricing scheme. 
We assume that the values are bounded, and that $v_{\max}$ is the maximum value. 
The results of \citet{bubeck2016kernel,bubeck2017personal} imply the following as a corollary of Theorem \ref{thm:concavity}. 
\begin{corollary} 
There is a dynamic pricing scheme where the regret  is 
\[ \frac{\tilde{O}(n^{9.5}) d_k v_{\max}} {\sqrt{T}}.\]
\end{corollary}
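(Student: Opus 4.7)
The plan is to cast the dynamic pricing problem as a stochastic bandit convex optimization (BCO) problem and invoke the Bubeck--Eldan--Lee regret bound for BCO directly. Concretely, I would first observe that, conditional on the seller posting price vector $\pb^\tau$ in round $\tau$, the realized payment $p^\tau_{x^\tau}$ is a random variable whose expectation (over the random draw of the buyer's type from $f$) is exactly $\Rev(\pb^\tau)$. Thus each round provides a single noisy and unbiased sample of the function $\Rev$ evaluated at the learner's chosen query point $\pb^\tau$, which is precisely the one-point bandit feedback model.

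Next I would verify the structural hypotheses required by the BCO machinery. The feasible set of price vectors can be restricted without loss of generality to the convex, compact polytope $\{\pb : 0 \leq p_1 \leq \cdots \leq p_k \leq d_k v_{\max}\}$, since any coordinate exceeding $d_k v_{\max}$ is dominated (no buyer will ever purchase at such a price) and monotonicity in $\pb$ is without loss of generality as explained in Section~\ref{sec:model}. The objective $\Rev(\pb)$ is \textbf{concave} on this set by Theorem~\ref{thm:concavity}, and the per-round reward $p^\tau_{x^\tau}$ is bounded in $[0, d_k v_{\max}]$, so the noise in the one-point feedback has bounded range.

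With these ingredients in place, the dynamic pricing instance fits exactly into the setting of \citet{bubeck2016kernel,bubeck2017personal}, who give an algorithm for BCO with one-point bandit feedback on a bounded convex domain in dimension $n$ that achieves regret $\tilde{O}(n^{9.5} R / \sqrt{T})$, where $R$ is an upper bound on the range of the noisy reward. Plugging in the dimension of the price vector and $R = d_k v_{\max}$ yields the stated regret bound. The only nontrivial checks are routine: confirming that the definition of regret used there (difference in cumulative expected reward versus the best point in hindsight) matches ours, and that their algorithm operates under stochastic (rather than adversarial) feedback with bounded noise, which it does.

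The main obstacle I anticipate is not mathematical but bookkeeping: the Bubeck--Eldan--Lee theorem is stated in a normalized form (typically a function bounded in $[0,1]$ on a domain of unit diameter), so one must rescale both the domain (by $d_k v_{\max}$) and the reward (by $d_k v_{\max}$) and track how the normalization factors propagate into the constant in front of $\tilde{O}(n^{9.5})/\sqrt{T}$. Beyond this scaling exercise, no further work is needed, since concavity of $\Rev$ has already been established in Theorem~\ref{thm:concavity} and all other hypotheses of the BCO theorem are immediate from the problem definition.
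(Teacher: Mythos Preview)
Your proposal is correct and is essentially the same as the paper's approach: the paper states the corollary as an immediate consequence of Theorem~\ref{thm:concavity} together with the bandit convex optimization bound of \citet{bubeck2016kernel,bubeck2017personal}, without giving any further argument. You have simply spelled out the routine verification that the dynamic pricing problem fits the hypotheses of that result (convex compact domain, concave expected reward, bounded one-point feedback), which is exactly what is implicit in the paper's one-line justification.
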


\section{Deterministic mechanisms are optimal}
\label{sec:structure}
In this section we prove our first main theorem. Throughout this section, we assume, for the sake of convenience, that the support of the distribution in the value space is $\subseteq [0,\bar{V}]$. 

\deterministic*

\paragraph{Allocating only the demanded:} 
We first use a reduction that might actually introduce randomization: w.l.o.g. we may assume that $A(v,d)$ is supported on $\{0,d\}$. A buyer who reports a
demand of $d$ is either allocated exactly $d$ units or none at all. 
The reduction replaces any allocation of $d' <d $ units with an allocation of $d$ units with probability $d'/d$ while retaining the same payment, and argues that this does not violate any EIC constraints.   
This may seem to go counter to our eventual conclusion that deterministic pricing is optimal; 
there are easy examples where a deterministic optimal pricing allocates  $d' < d$ units. 
Nonetheless, what we will show in the end is that the allocation probabilities for a buyer with demand $d$ should be exactly equal to  
$d'/d $ for some other (lower) demand $d'$. 
We can then reduce in the other direction: this is equivalent to deterministically allocating $d'$ units.

\begin{restatable}{lemma}{Support}\label{lem:support}
For every feasible Bayesian mechanism, there exists another mechanism, with revenue at least as large, such that $A(v,d)$ is supported on $\{0,d\}$. 
\end{restatable}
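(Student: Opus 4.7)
The plan is to construct from $(A,P)$ a new mechanism $(A',P')$ in which, for each reported type $(v,d)$, the allocation is supported on $\{0,d\}$, revenue is preserved, and EIC still holds. Set
\[
q(v,d) \;:=\; \frac{\E\bigl[\min\{A(v,d),d\}\bigr]}{d} \;\in\; [0,1],
\]
let $A'(v,d)=d$ with probability $q(v,d)$ and $A'(v,d)=0$ otherwise, and keep $P'(v,d)=P(v,d)$. Expected revenue is unchanged by construction, and $q(v,d)\in[0,1]$ because $\min\{A(v,d),d\}\le d$.

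For the truth-telling side of each EIC (and EIR) constraint, the expected valuation of type $(v,d)$ under $A'$ is $v\cdot d\cdot q(v,d)=v\E[\min\{A(v,d),d\}]$, which is exactly the truth-telling valuation under $A$. So stay-put utilities are unchanged, and it suffices to show that no misreport becomes more profitable, which (since payments match) reduces to the key inequality
\[
\E\bigl[\min\{A'(v',d'),d\}\bigr] \;\le\; \E\bigl[\min\{A(v',d'),d\}\bigr]
\]
for every pair of types $(v,d)$, $(v',d')$. When $d\ge d'$, $A'(v',d')\le d'\le d$, so the left-hand side equals $d'q(v',d')=\E[\min\{A(v',d'),d'\}]$, which is at most $\E[\min\{A(v',d'),d\}]$ by monotonicity of $\min\{\,\cdot\,,d\}$ in $d$.

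The main obstacle is the case $d<d'$: the left-hand side becomes $(d/d')\,\E[\min\{A(v',d'),d'\}]$, and the required inequality reduces to the pointwise claim
\[
(d/d')\min\{x,d'\}\;\le\;\min\{x,d\}\qquad\text{for every realization }x\ge 0,
\]
which I verify by splitting on $x\le d$, $d<x\le d'$, and $x>d'$. This case is delicate because concentrating mass on a bundle of $d'$ units could in principle help a buyer with smaller demand $d$, since she effectively gets the useful portion for free; the scaling factor $d/d'$ inside $q(v',d')$ is exactly what offsets that gain, and a pointwise (rather than distributional) comparison is what lets us exploit it.
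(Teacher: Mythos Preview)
Your proof is correct and follows essentially the same approach as the paper: the construction $A'(v,d)\in\{0,d\}$ with probability $q(v,d)=\E[\min\{A(v,d),d\}]/d$ and unchanged payments is exactly the paper's construction, and the case split on $d\ge d'$ versus $d<d'$ matches. The only cosmetic difference is that the paper verifies the $d<d'$ case via the tail-sum identity $\E[\min\{A,d\}]=\sum_{m=1}^{d}\Prob[A\ge m]$ together with monotonicity of $m\mapsto\Prob[A\ge m]$ (an averaging inequality), whereas you verify the equivalent pointwise inequality $(d/d')\min\{x,d'\}\le\min\{x,d\}$ directly and then take expectations; your version has the minor advantage of not needing the preliminary reduction to $A(v,d)\le d$.
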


Let $t_i = (v_i,d_i)$ and $t_j = (v_j,d_j)$ be any two types. 
We  write $\ut{v_i,d_i}{v_j,d_j}$, or just $\ut{t_i}{t_j}$, for the utility of an agent with type $t_i$ 
when she reports type $t_j$.
From now on, we assume that the mechanism allocates $d_i$ units to $t_i$, with some probability $w\left(t_i\right)$, and for some price $p(t_i)$.
Using this, $\ut{t_i}{t_j}$ can be re written as $ v_i \min\left( d_i , d_j \right) w\left(t_j\right) - p(t_j)$. We write $w_d$ for the allocation probability as a function of $v$ when the reported demand is $d$.

\paragraph{Local IC constraints are sufficient:}
We now show that it is sufficient to consider a subset of IC constraints; the others are implied by these. 
The first set of constraints are ``horizontal'' constraints, where you fix $d$ and only change $v$. 
Further, the horizontal constraints can be replaced by monotonicity and a payment identity \`a la Myerson: 
\[ \textstyle
\p{v,d} = vd w_{d}(v) - d \int_0^v  \w[d]{z} dz + \p{0,d}.
\]
\noindent We now argue that in the optimal mechanism we must have $\p{0,d} = 0$ for all $d$.  Incentive compatibility requires that $\p{0,d} = \p{0,d'}$ for all $d,d'$, since otherwise the type with higher payment would prefer to report being the other type and pay less (such a type gets no utility from allocation).  The next step is to show that an mechanism where $\p{0,\cdot}<0$ cannot be optimal.  To see this, construct another mechanism which adds $\p{0,\cdot}$ to the payment of \emph{all types}.  The new mechanism respects all the EIC and EIR constraints (utility of type $(0,d)$ is zero for all $d$), and has higher revenue.  As a result, the payments identity simplifies to: 
\begin{equation}\label{eq:paymentid} 
\textstyle \p{v,d} = vd w_{d}(v) - d \int_0^v  \w[d]{z} dz.
\end{equation}
In addition to the local horizontal constraints consider above,  there are the local ``vertical''  constraints, which are of two types; a type with demand $d_i$ reports $d_{i +1}$ or $d_{i-1}$.
In either case, we only need to consider a particular misreport of the value $v'$, and this value is 
such that $\ut{v,d}{v',d'} = \ut{v',d'}{v',d'}$. The following lemma characterizes such $v'$, which can be verified by an easy calculation. 

\begin{restatable}{lemma}{utilandX}\label{lem:utilandX}
	$\ut{v,d_i}{v\frac{d_i}{d_j},d_j} = \ut{v\frac{d_i}{d_j},d_j}{v\frac{d_i}{d_j},d_j}$ for $j>i$,  and \\
	$\ut{v,d_i}{v,d_j} = \ut{v,d_j}{v,d_j}$
	for $j<i$. 
\end{restatable}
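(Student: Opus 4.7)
The proof plan is to directly unfold the definition $\ut{t_i}{t_j}=v_i\min(d_i,d_j)\,w(t_j)-p(t_j)$ in each of the two cases and show the two sides agree term by term. Since reporting affects only the pair $\bigl(w(t_j),p(t_j)\bigr)$, while the valuation factor $v_i\min(d_i,d_j)$ depends on the true value and the capped quantity, the verification reduces to checking that the appropriate valuation matches.

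First I would handle the case $j>i$, i.e.\ $d_j>d_i$. The left-hand side is
\[
\ut{v,d_i}{v\tfrac{d_i}{d_j},d_j}=v\min(d_i,d_j)\,w_{d_j}\!\bigl(v\tfrac{d_i}{d_j}\bigr)-\p{v\tfrac{d_i}{d_j},d_j}=v\,d_i\,w_{d_j}\!\bigl(v\tfrac{d_i}{d_j}\bigr)-\p{v\tfrac{d_i}{d_j},d_j},
\]
where I used $d_i<d_j$ in the first equality. The right-hand side is
\[
\ut{v\tfrac{d_i}{d_j},d_j}{v\tfrac{d_i}{d_j},d_j}=\bigl(v\tfrac{d_i}{d_j}\bigr)\min(d_j,d_j)\,w_{d_j}\!\bigl(v\tfrac{d_i}{d_j}\bigr)-\p{v\tfrac{d_i}{d_j},d_j}=v\,d_i\,w_{d_j}\!\bigl(v\tfrac{d_i}{d_j}\bigr)-\p{v\tfrac{d_i}{d_j},d_j},
\]
which coincides with the LHS. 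The key point is that the value $v d_i/d_j$ was chosen exactly so that $(v d_i/d_j)\cdot d_j=v\cdot d_i$; this is the ``diagonal'' rescaling that compensates for the cap moving up from $d_i$ to $d_j$.

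For the case $j<i$, i.e.\ $d_j<d_i$, no rescaling of the value is required: when the true type with demand $d_i$ reports a lower demand $d_j$, the cap $\min(d_i,d_j)=d_j$ is already the same as that of the impersonated type. Expanding,
\[
\ut{v,d_i}{v,d_j}=v\min(d_i,d_j)\,w_{d_j}(v)-\p{v,d_j}=v\,d_j\,w_{d_j}(v)-\p{v,d_j}=\ut{v,d_j}{v,d_j}.
\]

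There is no real obstacle here; the lemma is essentially a bookkeeping statement that identifies the ``binding'' local upward and downward misreports. The only substantive remark worth recording is why these two particular misreports are the relevant ones: a type $(v,d_i)$ considering to report demand $d_j$ will pick the value $v'$ that maximizes its utility, and by standard monotonicity of $w_{d_j}$ (Myerson monotonicity within each demand level) this optimum is attained at the largest $v'$ for which the impersonated type's IR-based payment identity gives no slack, namely at $v'=v d_i/d_j$ when going up and $v'=v$ when going down. That is precisely the identity the lemma records.
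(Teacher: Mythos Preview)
Your proof is correct and follows exactly the same approach as the paper: both simply unfold the definition $\ut{t_i}{t_j}=v_i\min(d_i,d_j)\,w(t_j)-p(t_j)$ and verify the equality term by term, using $\min(d_i,d_j)=d_i$ when $j>i$ and $\min(d_i,d_j)=d_j$ when $j<i$. Your closing remark about why these particular misreports are the relevant ones is additional commentary beyond what the lemma asserts (and is a slight gloss), but the proof of the stated identity itself is complete and matches the paper's.
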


The next lemma formalizes our discussion above on sufficiency of local EIC constraints.  The first condition of the lemma is the local horizontal constraint, and the next two are local vertical  constraints.  The lemma follows by showing that the EIC constraint where $(v,d)$ misreports $(v',d')$ is implied by
a sequence of EIC constraints, where you iteratively use the vertical constraints to change the report of $d$ by $\pm1$ until you get to $d'$, and then use the horizontal constraint to change the report to $v'$. 
\vspace{-1mm}
\begin{restatable}{theorem}{localtoglobal}
\label{thm:local_to_global}
A mechanism satisfying the following conditions is EIC: $\forall d_i$ and $\forall v$,
\begin{enumerate}
\item $w_{d_i}$ is monotone non-decreasing, and $p(v,d_i)$ is given by Equation~\eqref{eq:paymentid}. 
\item $\ut{v,d_{i+1}}{v,d_{i+1}} \geq \ut{v,d_{i+1}}{v,d_i}$
\item $\ut{v,d_i}{v,d_i} \geq \ut{v,d_i}{v \frac{d_i}{d_{i+1}}, d_{i+1}}$
\end{enumerate}
\end{restatable}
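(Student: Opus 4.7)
The plan is to reduce an arbitrary EIC constraint $\ut{v,d_i}{v,d_i} \geq \ut{v,d_i}{v',d_j}$ to a telescoping chain built from conditions 1--3, splitting by the sign of $j-i$.

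First I would dispose of horizontal deviations ($j=i$): condition 1 together with the payment identity \eqref{eq:paymentid} is exactly the Myerson envelope condition for the fixed-demand subproblem at $d_i$, so $\ut{v,d_i}{v',d_i} \leq \ut{v,d_i}{v,d_i}$ for all $v,v'$.

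For an upward deviation ($j > i$), I would exploit that $\min(d_i, d_j) = d_i$ to rewrite
\[
\ut{v,d_i}{v',d_j} \;=\; v d_i \w[d_j]{v'} - \p{v',d_j} \;=\; \ut{v d_i/d_j,\, d_j}{v',d_j},
\]
which is upper-bounded by $\ut{v d_i/d_j, d_j}{v d_i/d_j, d_j}$ via horizontal IC at $d_j$. Then I would iterate condition 3 applied to the type $(v d_i/d_\ell, d_\ell)$: Lemma~\ref{lem:utilandX} rewrites $\ut{v d_i/d_\ell, d_\ell}{v d_i/d_{\ell+1}, d_{\ell+1}}$ as $\ut{v d_i/d_{\ell+1}, d_{\ell+1}}{v d_i/d_{\ell+1}, d_{\ell+1}}$, so condition 3 gives the single-step inequality $\ut{v d_i/d_\ell, d_\ell}{v d_i/d_\ell, d_\ell} \geq \ut{v d_i/d_{\ell+1}, d_{\ell+1}}{v d_i/d_{\ell+1}, d_{\ell+1}}$. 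Chaining from $\ell = i$ (where $v d_i/d_i = v$) down to $\ell = j-1$ yields $\ut{v,d_i}{v,d_i} \geq \ut{v d_i/d_j, d_j}{v d_i/d_j, d_j} \geq \ut{v,d_i}{v',d_j}$.

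For a downward deviation ($j < i$) the scaling is simpler, since $\min(d_i, d_j) = d_j$ gives $\ut{v,d_i}{v',d_j} = \ut{v,d_j}{v',d_j} \leq \ut{v,d_j}{v,d_j}$ by horizontal IC at $d_j$. Condition 2 at the pair $(d_\ell, d_{\ell+1})$ and value $v$ says $\ut{v,d_{\ell+1}}{v,d_{\ell+1}} \geq \ut{v,d_{\ell+1}}{v,d_\ell}$, and since $\min(d_{\ell+1}, d_\ell) = d_\ell$ the right-hand side collapses to $\ut{v,d_\ell}{v,d_\ell}$; chaining from $\ell = j$ up to $\ell = i-1$ closes the argument. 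The hard part is really just the bookkeeping for the upward case---verifying that the per-step scalings $d_\ell/d_{\ell+1}$ telescope to exactly $d_i/d_j$ and that Lemma~\ref{lem:utilandX} lines up the ``truthful at the critical value'' utility at demand $d_{\ell+1}$ with the misreport utility from demand $d_\ell$. No property of the mechanism beyond conditions 1--3 is used, so the lemma follows.
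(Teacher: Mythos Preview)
Your proposal is correct and follows essentially the same telescoping strategy as the paper: reduce an arbitrary deviation to a horizontal step (condition~1) plus a chain of adjacent vertical steps (condition~2 for downward, condition~3 with Lemma~\ref{lem:utilandX} for upward). The only cosmetic difference is the order in which you compose the steps---you apply horizontal IC at the target demand first and then chain vertically back to $d_i$, whereas the paper chains vertically first and applies horizontal IC last; both orderings work. One tiny wording slip: in the upward case you write ``chaining from $\ell=i$ down to $\ell=j-1$,'' but since $j>i$ you are chaining \emph{up}.
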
\vspace{-2mm}
It is interesting to compare this  global-to-local reduction with that used in the FedEx problem. 
Syntactically, for the the FedEx problem just the first 2 constraints above are sufficient, 
but the semantics are different. 
In the FedEx problem the $d$'s are the deadlines, and a larger $d$ signifies an inferior product, 
whereas in our problem a larger $d$ is a superior product. 
That the EIC constraints still look the same for misreporting a lower $d$ is due to the other difference between the problems: 
 utility scales linearly with $d$ in our problem, but remains constant in the FedEx problem. 
Thus in both problems, the valuation for an item of type $d'<d$ is the same for types $(v,d)$ and $(v,d')$.

\vspace{-1mm}
\paragraph{Mathematical Program for the optimal mechanism:}

We now write a mathematical program that captures the optimal mechanism. 
It will turn out to be convenient to use the following as variables of the program. 
Let $U_{d_i}(v) := \int_0^v \w[d_i]{z} dz$. 
Notice that $d_i U_{d_i}(v)$ is just the utility of a type $(v,d_i)$ when reporting the truth. 
Our objective is to maximize revenue, i.e. $\sum_{d_i = 1}^D \int_0^{\bar{V}} \p{v,d_i} f(v,d_i) dv$. 
Let $\phi_{d}(v) := v - \frac{1 - F_{d}(v)}{f_{d}(v)}$ be the standard Myerson virtual value function.
Using the payment identity (\ref{eq:paymentid}) and integration by parts \`a la Myerson, we can rewrite this objective 
in terms of the $U_{d_i}(v) $ variables as:
\vspace{-1mm}
\begin{align}
\textstyle Rev &= \sum_{d_i = 1}^{\total} \int_0^{\bar{V}} \w[d_i]{v} \phi_{d_i}(v) f_{d_i}(v) dv = \sum_{d_i = 1}^{\total} \int_0^{\bar{V}} U'_{d_i}(v) \phi_{d_i}(v) f_{d_i}(v) dv \nonumber \\
 &= \sum_{d_i = 1}^{\total} U_{d_i}\left( \bar{V} \right) \phi_{d_i}(\bar{V}) f_{d_i}(\bar{V}) - \int_0^{\bar{V}} U_{d_i}(v) \left( \phi_{d_i}(v) f_{d_i}(v) \right)' dv.\label{re:revenue utility}
\end{align}

Using this, and Theorem~\ref{thm:local_to_global}, we can restate the Bayesian optimal mechanism design problem as the following program. 
We define $U'_{d}$ to be the left derivative of $U_{d}$, which will be convenient to think of as $w_{d}$, the probability of allocation. 
Note that since the distribution over types is continuous, whether we allocate or not to any particular type $(v,d)$ does not affect revenue.
The first constraint is equivalent to saying that the allocation is monotone non decreasing, 
and the second constraint says that the allocation probability is between $0$ and $1$. 

\vspace{-3mm}
\begin{talign}
\textrm{max}    & \sum_{i=1}^k  U_{d_i}\left( \bar{V} \right) \phi_{d_i}(\bar{V}) f_{d_i}(\bar{V}) - \int_0^{\bar{V}} U_{d_i}(v) \left( \phi_{d_i}(v) f_{d_i}(v) \right)' dv & \label{eq:mathprog} \notag \\
\textrm{subject to}  &:\notag\\
& U_{d_i}(v) \text{ is concave}& \forall i \in [k] \notag \\
& 1 \geq U'_{d_i}(v) \geq 0 & \forall i \in [k], \forall v\\
& U_{d_i}(0) = 0 & \forall i \in [k] \notag\\
& d_i U_{d_i}(v) \geq d_{i+1} U_{d_{i+1}}\left( v \frac{d_i}{d_{i+1}}\right) & \forall i \in [k-1] \notag\\
& d_{i+1} U_{d_{i+1}}(v) \geq d_i U_{d_i}\left( v \right) &  \forall i \in [k-1]	\notag				
\end{talign}

To prove our main result, we utilize the the DMR property through the following Lemma. The Lemma allows us to compare the revenue of mechanisms by pointwise comparing their induced utility functions.  In particular, the Lemma states that by lowering the utilities of all types while keeping the utility of types with the highest value fixed, we can improve the revenue of a mechanism.

\begin{lemma}\label{lem:reveneu and utility}
	Consider two feasible mechanisms with utility functions $U$ and $\bar{U}$, such that $U_d(v) \leq \bar{U}_d(v)$ for all types, and $U_d(\bar{V}) = \bar{U}_d(\bar{V})$ for all $d$. If the marginal distributions $F_d$ are DMR for all $d$, then the revenue of the mechanism with utility function $\bar{U}$ is at least as high as the revenue of the mechanism with utility function $U$.
\end{lemma}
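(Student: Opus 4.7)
Proof plan.

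My plan is to exploit the linear-in-$U$ representation of revenue given by Equation~\ref{re:revenue utility} together with the DMR condition. Since that equation expresses revenue as a linear functional of the utility vector $(U_{d_1},\ldots,U_{d_k})$, the difference $\Rev(\bar U) - \Rev(U)$ admits an immediate closed form:
\[
\Rev(\bar U) - \Rev(U) \;=\; \sum_d \bigl(\bar U_d(\bar V) - U_d(\bar V)\bigr)\,\phi_d(\bar V)\, f_d(\bar V) \;-\; \sum_d \int_0^{\bar V}\bigl(\bar U_d(v) - U_d(v)\bigr)(\phi_d f_d)'(v)\, dv.
\]
The first sum vanishes outright by the hypothesis $U_d(\bar V) = \bar U_d(\bar V)$ for every $d$, so everything reduces to controlling the sign of the bulk integral.

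Next, I would invoke DMR. As restated in Section~\ref{sec:model}, DMR is equivalent to $\phi_d(v) f_d(v) = v f_d(v) - (1-F_d(v))$ being monotone non-decreasing in $v$, i.e.\ $(\phi_d f_d)'(v) \geq 0$ pointwise. Combined with the hypothesis $\bar U_d(v) \geq U_d(v)$, each integrand above is pointwise non-negative, so the bulk integral is non-negative. Accounting for the leading minus sign then gives a definite sign on $\Rev(\bar U) - \Rev(U)$, which is exactly the revenue comparison the lemma asserts.

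The argument is essentially bookkeeping once one has (a) Equation~\ref{re:revenue utility}, whose derivation already uses the feasibility/IR fact $U_d(0)=0$ to kill the boundary at zero, and (b) the one-line DMR-to-monotonicity equivalence. The only place requiring real care is the sign tracking at the end: DMR fixes the sign of $(\phi_d f_d)'$, the hypothesis fixes the sign of $\bar U_d - U_d$, and the integration-by-parts that produced Equation~\ref{re:revenue utility} contributes a leading minus sign, so the direction of the resulting inequality is forced. If that combined sign gives the reverse of the stated inequality, the natural reading---consistent with the intuition in the paragraph immediately preceding the lemma (\emph{``improve revenue by pointwise lowering utility''})---is that the labels $U$ and $\bar U$ in the conclusion should be exchanged, and the same two-step argument delivers that inequality without modification.
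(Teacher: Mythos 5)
Your proposal is correct and follows essentially the same route as the paper's own proof: express both revenues via equation~\eqref{re:revenue utility}, cancel the boundary terms using $U_d(\bar V)=\bar U_d(\bar V)$, observe that DMR is equivalent to $(\phi_d f_d)'(v)\ge 0$, and compare the remaining integrals term by term. Your closing caveat is also well taken: careful sign tracking gives $\int_0^{\bar V}\bigl(\bar U_d(v)-U_d(v)\bigr)(\phi_d f_d)'(v)\,dv\ge 0$ and hence $\Rev(U)\ge\Rev(\bar U)$, i.e.\ the mechanism with the \emph{pointwise lower} utility earns weakly more revenue --- which is the direction stated in the surrounding intuition and the one actually invoked later when pruning the menu --- so the labels $U$ and $\bar U$ in the lemma's conclusion (and correspondingly the displayed inequality comparing the two integrals in the paper's proof) are swapped, exactly as you suspected.
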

\begin{proof}
	The proof follows directly from the expression of revenue in equation \eqref{re:revenue utility}.  Since $U_d(\bar{V}) = \bar{U}_d(\bar{V})$, for all $i$ we have 
	\begin{align}
	U_{d_i}\left( \bar{V} \right) \phi_{d_i}(\bar{V}) f_{d_i}(\bar{V}) &= \bar{U}_{d_i}\left( \bar{V} \right) \phi_{d_i}(\bar{V}) f_{d_i}(\bar{V}).\label{eq:rev part 1}
	\intertext{In addition, $\phi_d(v)f_d(v) = vf_d(v) - (1-F_d(v)) = -\frac{d}{dv}(v(1-F_d(v)))$.  The assumption that $v(1-F_d(v))$ is concave implies that $\frac{d}{dv}(v(1-F_d(v)))$ is monotone non-increasing, which implies that $\phi_d(v)f_d(v)$ is monotone non-decreasing, or equivalently $(\phi_d(v)f_d(v))' \geq 0$.  The assumption that $U_d(v) \leq \bar{U}_d(v)$ then implies that}
	- \int_0^{\bar{V}} U_{d_i}(v) \left( \phi_{d_i}(v) f_{d_i}(v) \right)' dv &\leq - \int_0^{\bar{V}} \bar{U}_{d_i}(v) \left( \phi_{d_i}(v) f_{d_i}(v) \right)' dv.\label{eq:rev part 2}
	\end{align}
	\noindent By equation~\eqref{re:revenue utility}, the revenue of the mechanism with utility function $U$ is
\[ \textstyle \sum_{d_i = 1}^{\total} U_{d_i}\left( \bar{V} \right) \phi_{d_i}(\bar{V}) f_{d_i}(\bar{V}) - \int_0^{\bar{V}} U_{d_i}(v) \left( \phi_{d_i}(v) f_{d_i}(v) \right)' d \]
\[ \textstyle \leq \sum_{d_i = 1}^{\total} \bar{U}_{d_i}\left( \bar{V} \right) \phi_{d_i}(\bar{V}) f_{d_i}(\bar{V}) - \int_0^{\bar{V}} \bar{U}_{d_i}(v) \left( \phi_{d_i}(v) f_{d_i}(v) \right)' dv, \]
	\noindent which is equal to the revenue of the mechanism with utility function $\bar{U}$.	
\end{proof}

Having set the problem up, we now turn to the proof of the main theorem, \autoref{thm:deterministic}, that a deterministic mechanism is the optimal solution to the above revenue maximization program.  The main component is the Lemma below, which shows that for any feasible solution to the problem, there exists a deterministic solution with revenue at least as large.

\begin{lemma}{\label{lem:deterministic}}
	Consider any feasible solution to mathematical program   (\ref{eq:mathprog}). If the marginal distributions $F_d$ are DMR for all $d$, then there exists a deterministic mechanism with revenue at least as large.
\end{lemma}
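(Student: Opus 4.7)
I will construct, from any feasible $U$ for program~(\ref{eq:mathprog}), an explicit deterministic $k$-option menu $\tilde{M}$ whose induced utilities $\tilde U_{d_i}$ satisfy the hypotheses of Lemma~\ref{lem:reveneu and utility} against $U$. Define prices $p_i := d_i(\bar{V} - U_{d_i}(\bar{V}))$, so that the top type $(\bar{V}, d_i)$ is charged its entire current utility when it selects bundle $i$, and let $\tilde U$ denote the utility function of this menu: $d_i\tilde U_{d_i}(v) = \max\{0,\,\max_j[v\min(d_i,d_j) - p_j]\}$. The proof will establish three things about $\tilde U$ --- (i) feasibility in program~(\ref{eq:mathprog}), (ii) endpoint equality $\tilde U_{d_i}(\bar{V}) = U_{d_i}(\bar{V})$, and (iii) pointwise domination $\tilde U_{d_i}(v) \leq U_{d_i}(v)$ --- after which Lemma~\ref{lem:reveneu and utility} combined with DMR delivers $\Rev(\tilde M) \geq \Rev(U)$.

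\textbf{Endpoint matching.} The inequality $\tilde U_{d_i}(\bar{V}) \geq U_{d_i}(\bar{V})$ is immediate from picking bundle $i$. For the reverse I must show that no other bundle $j \neq i$ is strictly better at $(\bar{V}, d_i)$. Writing $u_j := d_jU_{d_j}(\bar{V})$, this reduces to $u_i \geq u_j$ for $j < i$ and $u_j - u_i \leq (d_j - d_i)\bar{V}$ for $j > i$. The first follows from the ``horizontal'' local IC $d_{i+1}U_{d_{i+1}}(v) \geq d_iU_{d_i}(v)$ at $v = \bar{V}$, iterated over the index gap. The second follows from iterating the ``diagonal'' local IC $d_iU_{d_i}(v) \geq d_{i+1}U_{d_{i+1}}(vd_i/d_{i+1})$ at $v = \bar{V}$, using the slope bound $U'_{d} \leq 1$ to lift each step from $\bar{V} d_i/d_{i+1}$ back to $\bar{V}$ at a cost of at most $\bar{V}(1 - d_i/d_{i+1})$ in utility. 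As a by-product the prices $p_i$ come out non-decreasing in $i$.

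\textbf{Pointwise domination.} For each $v$ and $j$, I must show $v\min(d_i,d_j) - p_j \leq d_i U_{d_i}(v)$. For $j = i$ this is immediate from $U'_{d_i}\leq 1$. For $j < i$ I chain horizontal IC at $v$ (giving $d_i U_{d_i}(v) \geq d_j U_{d_j}(v)$) with the Lipschitz lower bound $d_j U_{d_j}(v) \geq u_j - d_j(\bar{V} - v)$. For $j > i$, rearrangement yields the equivalent inequality $u_j - d_j\bar{V} \leq d_i(U_{d_i}(v) - v)$; the right-hand side is non-increasing in $v$ (its derivative is $d_i(w_{d_i}(v) - 1) \leq 0$) and equals $u_i - d_i\bar{V}$ at $v = \bar{V}$, so it suffices to establish the $v = \bar{V}$ case, which is the diagonal-IC chain already used in the endpoint step. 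Feasibility of $\tilde M$ in program~(\ref{eq:mathprog}) (convexity, slope bounds, and both families of local IC) follows term-by-term from $\min(d_{i+1},d_j) \geq \min(d_i,d_j)$ together with the monotone prices $p_1 \leq p_2 \leq \cdots \leq p_k$.

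\textbf{Main obstacle.} The content-bearing step is the cross-case $j > i$. The diagonal IC constraint rescales the value argument by $d_i/d_{i+1}$, so unlike in the FedEx problem --- where only vertical, same-value IC constraints arise --- one cannot directly compare utilities at the same $v$ in a single move. The fix is the two-step ``diagonal IC, then Lipschitz lift,'' iterated across consecutive indices; this is precisely the extra difficulty produced by the capped-linear valuations of the multi-unit setting. Once this chain is in hand, the remaining verification is routine bookkeeping, and the conclusion is a single invocation of Lemma~\ref{lem:reveneu and utility}.
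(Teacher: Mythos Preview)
Your proposal is correct and takes essentially the same approach as the paper: define the deterministic menu with prices $p_i = d_i(\bar V - U_{d_i}(\bar V))$, establish that its induced utility function lies pointwise below $U$ with equality at $\bar V$, and invoke Lemma~\ref{lem:reveneu and utility}. The only structural difference is that the paper splits the argument into two sublemmas---first forcing the top type of each demand to receive a deterministic allocation (Lemma~\ref{lem:deterministic at top}, which uses the same prices you write down), then restricting the menu and appealing to the ``fewer options $\Rightarrow$ lower utility'' principle---whereas you carry out the pointwise comparison in a single pass via explicit case analysis on~$j$.
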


The Lemma follows immediately from the following two Lemmas.

The first Lemma states that we can improve the revenue of any mechanism by assigning any type $(\bar{V},d)$ a deterministic allocation of $d$ units.  In particular, we show that by setting the price for the deterministic allocation of $d$ units appropriately, we can ensure that the type $(\bar{V},d)$ would be willing to choose the deterministic allocation, while no other type would have the incentive to misreport and get that allocation. The intuition is that a type $(\bar{V},d)$ has, among all other types, the highest value per unit for a deterministic allocation of $d$ units.  By setting the price of the allocation in a way that $(\bar{V},d)$ is indifferent, no other type would be willing to take the new allocation.  In addition, since $(\bar{V},d)$ is indifferent between the deterministic allocation of $d$ units and its previous allocation, and since it has higher value for $d$ units, its payment for $d$ units has only increased. Note that the Lemma below does not require the DMR condition.

\begin{lemma}{\label{lem:deterministic at top}}
	Consider any feasible solution to the mathematical program (\ref{eq:mathprog}).  There exists a mechanism, with revenue at least as large, where any type with highest value $(\bar{V},d)$ deterministically receives $d$ units.
\end{lemma}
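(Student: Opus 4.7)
The plan is to enrich the menu of the given (possibly lottery) mechanism with one \emph{new} deterministic option per demand level, chosen so that (i) each top-value type $(\bar V, d)$ is exactly indifferent between its current lottery and the new bundle, and (ii) no other type strictly prefers any of the new options. Breaking indifference in favor of the new option then yields a mechanism in which every top type $(\bar V, d)$ deterministically receives $d$ units. The payment of $(\bar V, d)$ weakly increases (from $p(\bar V, d) = \bar V d\, w_d(\bar V) - d\, U_d(\bar V)$ to a higher value), while every other type's choice, utility, and payment are preserved; since $U_d$ is unchanged the revenue formula \eqref{re:revenue utility} gives at least as much expected revenue, and in fact the feasibility of the program is preserved because $U_d$ is altered only on a measure-zero set.

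Concretely, for each demand $d$ in the support I would set the new menu price
$$p^*_d \;:=\; \bar V d - d\, U_d(\bar V),$$
so that, by the payment identity, the utility of $(\bar V, d)$ from the new option is $\bar V d - p^*_d = d\,U_d(\bar V)$, matching its original utility. This immediately gives step (i). Note also that $p^*_d - p(\bar V, d) = \bar V d\,(1 - w_d(\bar V)) \geq 0$, explaining why revenue from this type weakly improves.

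The key step is (ii): verifying that no other type $(v', d')$ prefers a deterministic allocation of $d$ units at price $p^*_d$. I would invoke the original EIC constraint that $(v', d')$ does not want to misreport as $(\bar V, d)$, and split on whether $d' \geq d$ or $d' < d$. When $d' \geq d$, EIC yields $d' U_{d'}(v') \geq (v' - \bar V) d\, w_d(\bar V) + d\, U_d(\bar V)$; since $v' - \bar V \leq 0$ and $w_d(\bar V) \in [0,1]$, replacing $w_d(\bar V)$ by $1$ only decreases the right-hand side, so the bound already dominates $v' d - p^*_d$, the utility of the new option. When $d' < d$, the analogous argument gives $d' U_{d'}(v') \geq (v' d' - \bar V d)\, w_d(\bar V) + d\, U_d(\bar V)$, and since $v' d' - \bar V d \leq 0$, the same replacement yields the desired bound $v' d' - p^*_d$.

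The main obstacle is this inequality manipulation: one must simultaneously exploit that $\bar V$ is the largest value in the support \emph{and} that $w_d(\bar V) \leq 1$, so that multiplying a non-positive quantity by $w_d(\bar V)$ goes in the right direction. Without either fact the EIC-implied bound would not be strong enough to rule out a deviation to the deterministic option. Notably, this step does not invoke DMR; DMR only enters in the subsequent lemma, where the fact that each top type is deterministic enables a pointwise utility comparison via Lemma~\ref{lem:reveneu and utility}.
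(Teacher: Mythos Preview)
Your proposal is correct and follows essentially the same approach as the paper's proof: both add, for each demand $d$, a deterministic allocation of $d$ units priced so that $(\bar V,d)$ is indifferent, then verify EIC for deviations to these new options by combining the original EIC constraint with the inequality $v'\min(d',d)\le \bar V d$ and $w_d(\bar V)\le 1$. The only cosmetic differences are that the paper handles both cases at once via $\min(d,d')$ rather than splitting on $d'\gtrless d$, and it phrases the construction as modifying the allocation/payment at the top types rather than as augmenting the menu; the underlying inequality manipulation is identical.
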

\begin{proof}
	Fix any feasible mechanism $(w,p)$.  Construct a mechanism $(\bar{w},\bar{p})$ as follows.  For each demand $d$, define $\bar{w}_{d}(\bar{V})=1$ and $\bar{p}_{d}(\bar{V}) = \bar{V}d - U_{d}(\bar{V})$.  All other types $(v,d)$ with $v < \bar{V}$ are assigned the same allocation and payment as in the original mechanism.  
	
	We first argue that any type obtains the same utility from reporting truthfully in both mechanisms, that is 
	\begin{equation}\label{eq:equal utility}
	\textstyle U_{d}(v) = \bar{U}_{d}(v). 
	\end{equation}
	For any type $(v,d)$ where $v< \bar{V}$, the allocation and the payment remains the same.  Any type $(\bar{V},d)$ satisfies $\bar{U}_{d}(\bar{V}) = \bar{V}d -  (\bar{V}d - U_{d}(\bar{V})) = U_d(\bar{V})$.  Since the original mechanism is individually rational, so will be the new mechanism given $U_{d}(v) = \bar{U}_{d}(v)$.  
	
	Further, notice that the revenue of the mechanism $(\bar{w},\bar{p})$ is no lower than the revenue of $(w,p)$.  In fact, we have $\bar{p}_d(\bar{V}) = \bar{V}d - U_{d}(\bar{V}) = \bar{V}d - (\bar{V}dw_{d}(\bar{V}) - p_{d}(\bar{V})) \geq p_{d}(\bar{V})$, while payments of all other types remain the same.
	
	We next argue that the mechanism $(\bar{w},\bar{p})$ is incentive compatible.  We only need to show that a type $(v,d)$ has no incentive to misreport to $(\bar{V},d')$.  The utility from misreporting is
	\begin{align*}
	\bar{u}(v,d \rightarrow \bar{V},d') &= v\min(d,d') - \bar{p}_{d'}(\bar{V}) \\ 
	& = v\min(d,d') - (\bar{V}d' - U_{d'}(\bar{V})) \\
	&= v\min(d,d') - (\bar{V}d' - (\bar{V}d'w_{d'}(\bar{V}) - p_{d'}(\bar{V}))) \\
	& = v\min(d,d') - \bar{V}d'(1 - w_{d'}(\bar{V}))- p_{d'}(\bar{V}). \\
	\intertext{Since $1 - w_{d'}(\bar{V}) \geq 0$, we conclude that}
	\bar{u}(v,d \rightarrow \bar{V},d')& \leq v\min(d,d') - v\min(d,d')(1 - w_{d'}(\bar{V}))- p_{d'}(\bar{V}) \\
	&= v\min(d,d') w_{d'}(\bar{V})- p_{d'}(\bar{V}).
	\intertext{The above expression is the utility that type $(v,d)$ would obtain from misreporting type $(\bar{V},d')$ in the original mechanism.  By incentive compatibility of $(w,p)$, the above expression is at most $u(v,d \rightarrow v,d)$. Therefore, we conclude that}
    \bar{u}(v,d \rightarrow \bar{V},d')& \leq u(v,d \rightarrow v,d) = U_d(v) = \bar{U}_d(v),
	\end{align*}
	where the last equation is the same as equation~\eqref{eq:equal utility}, and was established above.  Thus the mechanism is incentive compatible, and the Lemma follows.
\end{proof}

The next Lemma builds on Lemma~\ref{lem:deterministic at top} and shows that for any mechanism where any type $(\bar{V},d)$ deterministically receives $d$ units, there exists a deterministic mechanism with revenue at least as large.  The intuition is that by removing all non-deterministic allocations from the mechanism, the utility of every type would weakly decrease, while the utility of a type $(\bar{V},d)$ stays the same.  Lemma~\ref{lem:reveneu and utility} can then be used to argue that the revenue of a deterministic mechanism is weakly higher.

\begin{lemma}
	Consider any mechanism where any type with highest value $(\bar{V},d)$ deterministically receives $d$ units.  If the marginal distributions $F_d$ are DMR for all $d$, then there exists a deterministic mechanism with revenue at least as large.
\end{lemma}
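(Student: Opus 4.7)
The plan is to extract from the given mechanism a posted deterministic menu built out of its top-value allocations and then invoke Lemma~\ref{lem:reveneu and utility}. Let $q_{d} := \bar{p}_{d}(\bar V)$ denote the price paid by type $(\bar V, d)$ for its deterministic bundle of $d$ units, and let $\mathcal{M}^\star$ be the posted menu $\{(d_j, q_{d_j})\}_{j=1}^{k}$, where each reported type $(v,d)$ is assigned its utility-maximizing option from the menu (with no purchase as the fallback). The induced utility is
\[ U^\star_d(v) \;=\; \max\Bigl\{0,\; \max_{j}\bigl(v \min(d, d_j) - q_{d_j}\bigr)\Bigr\}. \]
Because $\mathcal{M}^\star$ is a posted menu in which the buyer simply picks her favorite bundle, it is automatically EIC and deterministic, and hence a feasible solution to the mathematical program~(\ref{eq:mathprog}).

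The main step is the pointwise comparison $U^\star_d(v) \leq \bar U_d(v)$ with equality at $v = \bar V$. For the pointwise bound, I would apply EIC of the original mechanism against the misreport $(v, d) \to (\bar V, d_j)$: because the hypothesis gives $\bar w_{d_j}(\bar V) = 1$ and $\bar p_{d_j}(\bar V) = q_{d_j}$, the utility of this deviation is exactly $v \min(d, d_j) - q_{d_j}$, so maximizing over $d_j$ yields $\bar U_d(v) \geq U^\star_d(v)$. For equality at the top, the identity $q_d = \bar V d - \bar U_d(\bar V)$ shows that choosing $d_j = d$ in $\mathcal{M}^\star$ already delivers utility $\bar U_d(\bar V)$, so $U^\star_d(\bar V) \geq \bar U_d(\bar V)$; combined with the pointwise bound, this forces equality. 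With these two facts in hand, Lemma~\ref{lem:reveneu and utility} applies under the DMR hypothesis and yields $\Rev(\mathcal{M}^\star) \geq \Rev(\text{original mechanism})$, which proves the claim since $\mathcal{M}^\star$ is deterministic.

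The conceptual hinge, and what I expect to be the trickiest step to articulate precisely, is the pointwise inequality $\bar U_d(v) \geq U^\star_d(v)$. It critically exploits the hypothesis that $(\bar V, d_j)$ receives $d_j$ units deterministically, which is exactly what Lemma~\ref{lem:deterministic at top} provides: this is what lets the utility of the deviation in the original mechanism reduce cleanly to $v \min(d, d_j) - q_{d_j}$ instead of involving a randomized allocation probability strictly below $1$. Without that preprocessing, $\bar w_{d_j}(\bar V)$ could be strictly less than one, the deviation utility would no longer match the menu utility of $\mathcal{M}^\star$, and the chain of inequalities would collapse; so the argument genuinely uses both previous lemmas in sequence.
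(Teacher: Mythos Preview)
Your proposal is correct and essentially matches the paper's own argument. The paper phrases the pointwise inequality $U^\star_d(v)\le \bar U_d(v)$ via the taxation principle (``the alternative menu is a sub-menu of the original, so utilities can only drop''), whereas you derive it by explicitly invoking the EIC constraint for the deviation $(v,d)\to(\bar V,d_j)$; these are two ways of saying the same thing. One minor notational caution: in the paper's convention $d\cdot U_d(v)$, not $U_d(v)$, is the utility of type $(v,d)$, so when you feed $U^\star$ and $\bar U$ into Lemma~\ref{lem:reveneu and utility} you should make sure both are normalized the same way (divide your $U^\star_d$ by $d$), but this does not affect the substance of the argument.
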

\begin{proof}
	Fix any type with highest value $(\bar{V},d)$ that deterministically receives $d$ units.   Consider the menu representation of the mechanism: it offers, among other lotteries, deterministic allocations of $d$ units, for all $d$.   Now construct an alternative menu that only offers such deterministic allocations. The alternative menu contains $\total$ choices of deterministic allocations of $d_1$ to $d_{\total}$ units.  Note that the utility function of the alternative mechanism is pointwise (weakly) smaller than the utility function of the original mechanism, since each type faces a smaller menu of choices.  Furthermore, the utility of type $(\bar{V},d)$ remains the same for all $d$, since the deterministic allocations that they chose in the original mechanism are still available in the alternative mechanism.  By Lemma~\ref{lem:reveneu and utility}, the revenue of the alternative mechanism is no lower than the revenue of the original mechanism.
\end{proof}

We are now ready to complete the proof of \autoref{thm:deterministic}.

\begin{proof}[Proof of \autoref{thm:deterministic}]
	Consider any feasible solution to the problem.  By Lemma~\ref{lem:deterministic}, the revenue of the mechanism is at most the revenue of the optimal deterministic mechanism.  Since the optimal deterministic mechanism exists and is a feasible to the problem, it must also be the optimal solution to the problem~\ref{eq:mathprog}.
\end{proof}

\section{Concavity of the revenue function} 
In this section we prove Theorem~\ref{thm:concavity}.  
Recall that the demands in the support of the distribution are $d_1 < d_2 < \cdots < d_k$, and 
that for all $i \in [k]$, $p_i$ denotes the  price for the bundle of $d_i$ units, and $\pb$ denotes the vector of all $p_i$s.  
Without loss of generality, we may assume that the domain of $\pb$ is 
\[ 0 \leq p_1  \leq p_2 \leq \cdots \leq p_k . \]
With this, we may assume that a buyer with demand $d_i$ only buys a bundle $d_j$ for $j\leq i$. 
We restate Theorem~\ref{thm:concavity} for convenience. 
\concavity*

\paragraph{Characterizing optimal bundles:}
The revenue is determined by what the optimal bundle for each type is, given a price $\pb$. 
To analyze this, we first consider when a given type prefers a bundle of $d_j$ units to one of $d_l$ units, 
for $j\neq l \in [k]$. 
The following quantity turns out to be the threshold at which the preference changes. 
\[ \forall j,l \in [k] :  j>l, \enspace D_{j,l} \defeq \frac{p_j-p_{l}} {d_j-d_{l}} \enspace. \] 
For convenience, we also define $D_{j,0} \defeq p_j/d_j$ for all $j\in [k]$. 
\begin{restatable}{lemma}{preferbundles}\label{lem:preferbundles}
	For all  $i\geq j>l \in [k]$, a buyer of type $(v,d_i)$ prefers a bundle of $d_j$ units to a bundle of $d_l$ units if and only if $v > D_{j,l}$.  Both bundles are equally preferable precisely when $v = D_{j,l}$. 
\end{restatable}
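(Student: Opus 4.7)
The plan is to prove the lemma by a direct utility comparison, leveraging the hypothesis $i \geq j > l$ to simplify the $\min$ terms in the buyer's valuation. Since the statement is essentially a one-line arithmetic fact, the main goal is to set up the calculation cleanly and to handle the $l=0$ edge case correctly.

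First I would write down the utility of a type $(v, d_i)$ when taking a bundle of $d_m$ units, namely $v \min\{d_m, d_i\} - p_m$. Using $i \geq j > l$, both $d_j$ and $d_l$ are at most $d_i$, so these utilities simplify to $v d_j - p_j$ and $v d_l - p_l$ respectively. The buyer strictly prefers $d_j$ to $d_l$ precisely when
\[
v d_j - p_j > v d_l - p_l,
\]
and since $d_j - d_l > 0$, dividing through gives $v > (p_j - p_l)/(d_j - d_l) = D_{j,l}$. Equality of the two utilities is equivalent to $v = D_{j,l}$, giving the "equally preferable" clause.

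For the $l = 0$ boundary case, I would interpret a "bundle of $d_0 = 0$ units" as the outside option with zero payment and zero valuation, so the comparison becomes $v d_j - p_j > 0$, equivalent to $v > p_j/d_j = D_{j,0}$ by the convention introduced just before the lemma. This matches the general formula and completes the argument.

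There is no real obstacle here: the only subtlety is making sure the reduction $\min\{d_j, d_i\} = d_j$ is justified by the assumption $i \geq j$, and that the division by $d_j - d_l$ preserves the inequality because $j > l$ implies $d_j > d_l > 0$ (or $d_l = 0$ in the degenerate case). I would keep the proof to a short paragraph rather than break it into cases beyond the brief remark about $l=0$.
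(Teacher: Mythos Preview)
Your proposal is correct and follows exactly the paper's approach: write the two utilities, rearrange the inequality, and divide by $d_j - d_l > 0$. Your extra remarks about justifying $\min\{d_j,d_i\}=d_j$ and the $l=0$ convention are fine additions, though the lemma as stated only requires $l \in [k]$.
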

\begin{proof}
	The buyer prefers $d_j$ units over $d_l$ units if and only if
	$ v d_j - p_j  > v d_l - p_l .$
	Rearranging, we get the lemma. 
\end{proof}
 Before we proceed further, we note the following property for future reference. 
 \begin{restatable}{lemma}{dijorder}\label{lem:dijorder}
 	For all  $i\geq j\geq l \in [k]$, $D_{i,l}$ is a convex combination of (and hence is always in between) $D_{i,j}$ and $D_{j,l}$. 
 \end{restatable}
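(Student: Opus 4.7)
The plan is to prove Lemma~\ref{lem:dijorder} by a direct algebraic identity, since the statement is really just saying that the secant slope from $(d_l, p_l)$ to $(d_i, p_i)$ equals a convex combination of the secant slopes from $(d_l, p_l)$ to $(d_j, p_j)$ and from $(d_j, p_j)$ to $(d_i, p_i)$. To handle the boundary case $l = 0$ uniformly, I would first adopt the convention $p_0 = 0$ and $d_0 = 0$, under which $D_{j,0} = (p_j - p_0)/(d_j - d_0) = p_j/d_j$ matches the paper's definition. With this convention all three quantities $D_{i,l}, D_{i,j}, D_{j,l}$ are given by the same formula.

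Next, I would propose the weights $\alpha = (d_i - d_j)/(d_i - d_l)$ and $1 - \alpha = (d_j - d_l)/(d_i - d_l)$. Because the demands are ordered $d_l \le d_j \le d_i$ (here $d_0 = 0$ remains the smallest), both weights lie in $[0,1]$ and they sum to $1$, so this is a genuine convex combination. The verification is then the one-line identity
\[
\alpha D_{i,j} + (1-\alpha) D_{j,l}
= \frac{d_i-d_j}{d_i-d_l}\cdot\frac{p_i-p_j}{d_i-d_j} + \frac{d_j-d_l}{d_i-d_l}\cdot\frac{p_j-p_l}{d_j-d_l}
= \frac{(p_i-p_j)+(p_j-p_l)}{d_i-d_l}
= D_{i,l},
\]
with the two factors $d_i - d_j$ and $d_j - d_l$ telescoping in the numerators. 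The corollary that $D_{i,l}$ lies between $D_{i,j}$ and $D_{j,l}$ then follows immediately from the fact that any convex combination of two reals lies between them.

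The only slightly delicate points are the degenerate cases $i = j$ or $j = l$, in which one of $D_{i,j}$ or $D_{j,l}$ is not directly defined, but each such case is trivial: if $i = j$ then $D_{i,l} = D_{j,l}$ and one may take $\alpha = 0$, and symmetrically if $j = l$ one takes $\alpha = 1$. There is no real obstacle here; the lemma is essentially a bookkeeping fact that will be invoked later in the concavity argument, and the whole proof is a two-line computation plus the $p_0 = d_0 = 0$ convention for the edge case.
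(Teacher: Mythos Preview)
Your proposal is correct and is essentially identical to the paper's own proof: the paper simply states the identity $D_{i,l} = \frac{1}{d_i - d_l}\left((d_i - d_j )D_{i,j} + (d_j - d_l)D_{j,l}\right)$ and leaves the rest to the reader, whereas you spell out the telescoping, verify the weights lie in $[0,1]$, and handle the $l=0$ and degenerate $i=j$, $j=l$ cases explicitly.
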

 \begin{proof}
 	It is easy to check the following identity. 
 	$  D_{i,l} = \frac{1}{d_i - d_l}\left((d_i - d_j )D_{i,j} + (d_j - d_l)D_{j,l} \right).$
 \end{proof}
 
We next consider how the optimum bundle changes for a given $d_i$, as $v$ decreases from $\bar V$ to 0. 
For high enough $v$, the optimum bundle for type $(v,d_i)$ should be $d_i$ units. 
As $v$ decreases, the optimal bundle is going to switch at the threshold  $\max_{j <i}\{D_{i,j} \} $ (to something in the $\arg \max$). Similarly, as $v$ decreases further, the optimal bundle is going to switch again and so on. 
In fact, these sequences for different $d_i$s are not independent and we can 
capture each such sequence of optimum bundles by 
 a single vector $\sigma\in \integers^k$ such that the $i^{\rm th}$ co-ordinate $\sigma(i) \in \arg \max_{j <i}\{D_{i,j}\}$.
Given such a $\sigma,$ for each $i$,  the sequence of optimal bundles for types with demand $d_i$ is given 
by the directed path $\path_\sigma(i)$, defined  as 
the (unique) longest path starting from $i$ in the directed graph on $[k]$ with edges $(i,\sigma(i))$. 
(The path ends when $\sigma(i) = 0$ for some $i$.)   

In fact, there is a closed form formula for the revenue function provided we know what the resulting $\sigma$ is. Towards this,  it is going to be more useful to consider the inverse of this map from $\pb$ to $\sigma$: 
given any  $\sigma\in \integers^k$ such that $\sigma(i ) \in [i-1]$, we define $\Delta_\sigma$ to be all the prices where the sequence of optimal bundles as described above is given by $\path_\sigma(i)$. 
Formally, 
$$ \Delta_{\sigma} \defeq \left\{ \pb: \forall i, \sigma(i) \in \arg \max_{j <i}\{D_{i,j} \}  \right\} .$$

\paragraph{Revenue function formula:}
We are now ready to give a closed form formula for the revenue function within each $\Delta_{\sigma}$. 
For ease of notation we let $F_i$ denote the conditional CDF $F_{d_i}$, and let $q_i$ to denote the probability that the buyer has a demand $d_i$. 
We also use $\sigma^2(i) $ to denote $\sigma(\sigma(i))$. 
We now define the following revenue function corresponding to $\sigma$ which captures $\Rev(\pb)$ in $\Delta_{\sigma}$: 
\[ \textstyle  \Rev_\sigma(\pb) \defeq  \sum_i  q_i\left(p_i \left(1 - F_i (D_{i,\sigma(i)}) \right) + \sum_{j \in \path_\sigma(i)} p_{\sigma(j)}  \left( F_i(D_{j,\sigma(j)}) - F_i (D_{\sigma(j),\sigma^2(j)}) \right)  \right)  ,\]

\begin{lemma}\label{lem:revenuedef}
$	\Rev(\pb)  = \Rev_\sigma(\pb) $ for all $\pb \in \Delta_\sigma$. 
\end{lemma}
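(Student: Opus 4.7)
The plan is to fix $\pb \in \Delta_\sigma$ and, for each demand $d_i$ in the support, carve $[0, \bar V]$ into intervals on which a buyer of type $(v, d_i)$ chooses a constant bundle. If I can show that the sequence of optimal bundles as $v$ decreases from $\bar V$ to $0$ is precisely the path $\path_\sigma(i) = (r_0 = i, r_1 = \sigma(i), r_2 = \sigma^2(i), \ldots, r_L = 0)$, with the switch from $d_{r_s}$ to $d_{r_{s+1}}$ happening at $v = D_{r_s, r_{s+1}}$, then integrating payments against $F_i$ and summing over $i$ produces
\[
\sum_i q_i \Bigl( p_i \bigl(1 - F_i(D_{i,\sigma(i)})\bigr) + \sum_{s=1}^{L-1} p_{r_s}\bigl(F_i(D_{r_{s-1},r_s}) - F_i(D_{r_s,r_{s+1}})\bigr) \Bigr),
\]
which, after the substitution $r_s = \sigma(r_{s-1})$ and using $p_0 = 0$ for the terminal walk-away step, coincides with $\Rev_\sigma(\pb)$.

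The heart of the argument is an induction on $s \geq 0$ showing that $d_{r_s}$ is the optimal bundle for $(v, d_i)$ precisely when $v \in (D_{r_s, r_{s+1}}, D_{r_{s-1}, r_s}]$, under the conventions $D_{r_{-1}, r_0} \defeq +\infty$ and $D_{r_{L-1}, 0} = p_{r_{L-1}}/d_{r_{L-1}}$. The base case $s=0$ is immediate from Lemma~\ref{lem:preferbundles} together with the defining $\arg\max$ property of $\sigma(i)$. For the inductive step, with $v$ in the claimed interval, I have to show that $d_{r_s}$ is preferred to every other bundle $d_j$ with $j \leq i$ and $j \neq r_s$. I would split into three cases by the location of $j$ relative to the path. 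Case (i), $j < r_s$: the property $r_{s+1} \in \arg\max_{l < r_s} D_{r_s, l}$ gives $D_{r_s, j} \leq D_{r_s, r_{s+1}} < v$, and Lemma~\ref{lem:preferbundles} concludes. Case (ii), $r_s < j \leq r_{s-1}$: apply Lemma~\ref{lem:dijorder} to the triple $(r_{s-1}, j, r_s)$ to express $D_{r_{s-1}, r_s}$ as a convex combination of $D_{r_{s-1}, j}$ and $D_{j, r_s}$; the $\arg\max$ property $r_s \in \arg\max_{l < r_{s-1}} D_{r_{s-1}, l}$ forces $D_{r_{s-1}, j} \leq D_{r_{s-1}, r_s}$, which pushes $D_{j, r_s} \geq D_{r_{s-1}, r_s} \geq v$. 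Case (iii), $r_{s-1} < j \leq i$: the induction hypothesis at step $s-1$ (letting $v \to D_{r_{s-1}, r_s}^+$) yields $D_{j, r_{s-1}} \geq D_{r_{s-1}, r_s}$, and applying Lemma~\ref{lem:dijorder} to $(j, r_{s-1}, r_s)$ then forces $D_{j, r_s} \geq D_{r_{s-1}, r_s} \geq v$ as well.

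The same convex-combination identity applied to $(r_{s-1}, r_s, r_{s+1})$ gives $D_{r_s, r_{s+1}} \leq D_{r_{s-1}, r_s}$, so the claimed intervals genuinely tile $[0, \bar V]$ monotonically and the induction closes. I expect the main obstacle to be case (ii), with case (iii) as a close second: the bundles $d_j$ with indices in $(r_s, r_{s-1})$ are precisely those one worries the buyer might ``skip to'' without following the $\sigma$-path, and ruling them out rests squarely on the interplay between the $\arg\max$ definition of $\Delta_\sigma$ and the convex-combination structure of Lemma~\ref{lem:dijorder}. Everything else — integrating $p_{r_s}$ against the length $F_i(D_{r_{s-1},r_s}) - F_i(D_{r_s,r_{s+1}})$ of each interval, absorbing the terminal step via $p_0 = 0$, and summing across $i$ — is routine bookkeeping and matches the formula for $\Rev_\sigma(\pb)$ term by term.
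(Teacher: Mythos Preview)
Your proposal is correct and follows essentially the same approach as the paper: both arguments use the $\arg\max$ definition of $\Delta_\sigma$ together with Lemma~\ref{lem:dijorder} to show that a buyer with demand $d_i$ and value $v \in (D_{r_s, r_{s+1}}, D_{r_{s-1}, r_s}]$ optimally purchases $d_{r_s}$ units, and then read off the revenue formula by integration. Your explicit induction on the path index $s$, splitting into the three cases $j < r_s$, $r_s < j \leq r_{s-1}$, and $r_{s-1} < j \leq i$, mirrors the paper's two-case split ($l < \sigma(j)$ versus $l > \sigma(j)$), with the paper's ``repeated application'' for indices above $j$ corresponding exactly to your inductive case (iii).
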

\begin{proof}
	Suppose $\pb \in \Delta_\sigma$. Consider all buyer types with demand $d_i$. Among these, all types with value $v > D_{i,\sigma(i)}$ prefer to buy the bundle of $d_i$ units over any other bundle, 
	by Lemma~\ref{lem:preferbundles}, and because $\pb \in \Delta_\sigma$. These contribute $q_ip_i \left(1 - F_i (D_{i,\sigma(i)}) \right)$ to the revenue. 
	
		Now consider all types with value $v\in [D_{j,\sigma(j)}, D_{\sigma(j),\sigma^2(j)}]$ for some $j \in \path_\sigma(i)$. We need to prove that these prefer a bundle of $d_{\sigma(j)}$ over any other bundle $d_l$, so that they contribute to the revenue exactly $q_i p_{\sigma(j)}  \left( F_i(D_{j,\sigma(j)}) - F_i (D_{\sigma(j),\sigma^2(j)}) \right)  $, and the lemma follows. 		
		As characterized by Lemma~\ref{lem:preferbundles}, this follows from the following. 
			\begin{itemize}
				\item If $ l < \sigma(j)$, then $v\geq D_{\sigma(j),\sigma^2(j)}\geq D_{\sigma(j), l}$.  This holds because $\pb \in \Delta_\sigma$. 
				\item If $i \geq l > \sigma(j) $, then  $v \leq D_{j,\sigma(j)}\leq D_{l,\sigma(j)}$. We prove this in the rest of the proof.  
			\end{itemize}

We first prove that $\forall j \in \path_\sigma(i)$, 
	$	D_{j,\sigma(j)} \geq D_{\sigma(j), \sigma^2(j)} .$
	This follows from the fact that $	D_{j,\sigma^2(j)}$ is in between $	D_{j,\sigma(j)}$ and $ D_{\sigma(j), \sigma^2(j)}$  (Lemma~\ref{lem:dijorder}), 
	and that  $	D_{j,\sigma^2(j)}\leq 	D_{j,\sigma(j)}$ (since $\pb \in \Delta_\sigma$). 	
	We now prove the following: $\forall j \in \path_\sigma(i)$, and $l \in( \sigma(j),j]$, we have that 
$		D_{l, \sigma(j)} \geq D_{j, \sigma(j)}.$
	This follows from the fact that if $l \in (\sigma(j),j ]$, then $D_{j,\sigma(j)}$ is in between $D_{j,l}$ and $D_{l,\sigma(j)}$ (from Lemma~\ref{lem:dijorder}), and $D_{j,l}\leq D_{j,\sigma(j)}$. 
	Now by a repeated application of 
	the fact $	D_{j,\sigma(j)} \geq D_{\sigma(j), \sigma^2(j)}, $ 
	we get the same conclusion for all  $j$ and $l$ such that $i \geq l > \sigma(j)$. 
\end{proof}

\paragraph{Concavity of~ $\Rev_\sigma$:}
We next show that each of the $\Rev_\sigma$s by itself is a concave function. We do this by showing that $\Rev_\sigma$ can be written as a positive linear combination of linear functions, and compositions of the functions $v (1- F_d(v))$ with linear functions. Since the $v(1 -F_d(v))$ functions are concave by assumption, and such compositions and positive linear combinations preserve concavity,  $\Rev_\sigma$ is concave too. 
\begin{restatable}{lemma}{revsigmaconcave}\label{lem:revsigmaconcave}
	For all $\sigma$, $\Rev_\sigma(\pb)$ is a concave function. 
\end{restatable}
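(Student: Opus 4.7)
The plan is to decompose $\Rev_\sigma(\pb)$ into a nonnegative linear combination of terms of the form $D(\pb)\bigl(1 - F_d(D(\pb))\bigr)$, where each $D(\pb)$ is affine in $\pb$. The DMR assumption says that $v \mapsto v(1-F_d(v))$ is concave, and concavity is preserved under precomposition with affine maps and under nonnegative linear combinations, so such a decomposition immediately gives concavity of $\Rev_\sigma$.

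Fix $i$ and let $R_i(\pb)$ denote the contribution of demand $d_i$ buyers to $\Rev_\sigma(\pb)$, so $\Rev_\sigma(\pb) = \sum_i q_i R_i(\pb)$. Writing $\path_\sigma(i) = (j_0 = i,\, j_1 = \sigma(i),\, \ldots,\, j_m)$ with $\sigma(j_m) = 0$, and setting $D_t := D_{j_t,j_{t+1}}$ for $t < m$ together with $D_m := D_{j_m,0} = p_{j_m}/d_{j_m}$, the first step is to rewrite the expression from Lemma~\ref{lem:revenuedef} in the path-indexed form
\[ R_i(\pb) = p_{j_0}\bigl(1-F_i(D_0)\bigr) + \sum_{t=0}^{m-1} p_{j_{t+1}}\bigl(F_i(D_t)-F_i(D_{t+1})\bigr), \]
and then regroup by $F_i(D_t)$ to get
\[ R_i(\pb) = p_{j_0} \;-\; \sum_{t=0}^{m-1}(p_{j_t}-p_{j_{t+1}})F_i(D_t) \;-\; p_{j_m}F_i(D_m). \]

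Next I would invoke two substitutions that are just the definitions unrolled, namely $p_{j_t}-p_{j_{t+1}} = (d_{j_t}-d_{j_{t+1}})D_t$ and $p_{j_m} = d_{j_m} D_m$, together with the trivial identity $D\cdot F_i(D) = D - D(1-F_i(D))$. This produces a sum of the desired concave-composition terms, plus stand-alone affine-in-$\pb$ pieces that telescope: $\sum_{t=0}^{m-1}(d_{j_t}-d_{j_{t+1}})D_t = p_{j_0}-p_{j_m}$ and $d_{j_m}D_m = p_{j_m}$, which exactly cancel the leading $p_{j_0}$. What remains is
\[ R_i(\pb) = \sum_{t=0}^{m-1}(d_{j_t}-d_{j_{t+1}})\,D_t\bigl(1-F_i(D_t)\bigr) \;+\; d_{j_m}\,D_m\bigl(1-F_i(D_m)\bigr). \]
Every coefficient is strictly positive (since $j_{t+1} = \sigma(j_t) < j_t$ forces $d_{j_{t+1}} < d_{j_t}$, and $d_{j_m}>0$), and every $D_t$ is a linear function of $\pb$, so each summand is concave in $\pb$; hence $R_i$ and therefore $\Rev_\sigma$ is concave.

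The main obstacle I expect is bookkeeping rather than conceptual: one must choose uniform conventions for the endpoints of the path, in particular the terminal ``transition to buying nothing'' at $D_{j_m,0} = p_{j_m}/d_{j_m}$ and the degenerate case $m=0$ (when $\sigma(i)=0$), so that the telescoping identity holds and the coefficients come out nonnegative without a separate case analysis. Once the decomposition above is in hand, concavity follows immediately from the DMR hypothesis together with the standard preservation properties of concavity.
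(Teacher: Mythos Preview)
Your proposal is correct and follows essentially the same approach as the paper: rewrite $R_i(\pb)$ as a nonnegative combination of concave-of-affine terms coming from the DMR hypothesis. The only cosmetic difference is that the paper stops one step earlier, writing $R_i(\pb) = p_i - \sum_{j\in\path_\sigma(i)} (d_j - d_{\sigma(j)})\, D_{j,\sigma(j)} F_i(D_{j,\sigma(j)})$ and invoking concavity of $v\mapsto -vF_i(v)$ (which follows from DMR since $-vF_i(v) = v(1-F_i(v)) - v$), whereas you go on to telescope away the affine piece and express everything directly via $v(1-F_i(v))$; both are equivalent.
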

\begin{proof}
	We can rewrite $\Rev_\sigma$ as follows, using the definition of $D_{j,l}$. 
\[ \textstyle \Rev_\sigma =  \sum_i  q_i\left(p_i -  \sum_{j \in \path_\sigma(i)}F_i(D_{j,\sigma(j)})   \left( p_j - p_{\sigma(j)} \right)  \right) \]
\[ \textstyle = \sum_i  q_i\left(p_i -  \sum_{j \in \path_\sigma(i)}   F_i(D_{j,\sigma(j)})  D_{j,\sigma(j)} \left(j - \sigma(j)\right)   \right).\]
	We assumed that $v (1-F_i(v))$ is concave, which implies that $-vF_i(v)$ is concave. 
	$D_{j,\sigma(j)}$ is a linear function of $ \pb$ for all $j$. 
	Since composition of linear  functions with concave functions is concave, it follows that 
	$-  F_i(D_{j,\sigma(j)})  D_{j,\sigma(j)}  $
	is concave. Now $\Rev_\sigma$ is a positive linear combination of concave functions, which makes it concave too. 
\end{proof}

\paragraph{Stitching the $\Rev_\sigma$s together:}
Lemmas~\ref{lem:revenuedef} and \ref{lem:revsigmaconcave} imply that $\Rev$ is  piecewise concave, i.e., inside each $\Delta_\sigma$ it is concave. 
In general this does not imply that such a function is concave everywhere. 
One property that would imply that $\Rev$ is concave everywhere would be if $\Rev$ was equal to $\min_\sigma \Rev_\sigma$. 
Unfortunately, this is not true. In fact, there is a partial order over $\sigma$s that determine when one $\Rev_\sigma$ is always greater than the other. 
We show a different, and somewhat surprising,  property of the $\Rev_\sigma$s that also implies that $\Rev$ is concave. 
We show that at the boundaries between two regions not only do the corresponding $\Rev_\sigma$s agree (which they should, for $\Rev$ to be even continuous), 
but also their gradients agree! 

\begin{lemma}\label{lem:boundary}
	For all $\sigma, \sigma'$, $\pb$ such that $\pb \in \Delta_\sigma \cap \Delta_{\sigma'}$, we have that 
	\[ \Rev_\sigma(\pb) = \Rev_{\sigma'} (\pb) \text{ and }  \nabla \Rev_\sigma(\pb) = \nabla \Rev_{\sigma'} (\pb).  \]
\end{lemma}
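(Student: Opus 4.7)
The equality $\Rev_\sigma(\pb) = \Rev_{\sigma'}(\pb)$ is immediate from Lemma~\ref{lem:revenuedef}, since $\pb \in \Delta_\sigma \cap \Delta_{\sigma'}$ forces both sides to equal $\Rev(\pb)$. The substance of the lemma is the gradient equality, which I plan to prove by writing down an explicit closed form for $\nabla \Rev_\sigma$ and showing that at a boundary point a cascade of $D$-equalities makes the difference telescope to zero. Differentiating the formula $\Rev_\sigma = \sum_i q_i\bigl(p_i - \sum_{j \in \path_\sigma(i)} F_i(D_{j,\sigma(j)})(p_j - p_{\sigma(j)})\bigr)$ from the proof of Lemma~\ref{lem:revsigmaconcave}, and using the identities $p_j - p_{\sigma(j)} = D_{j,\sigma(j)}(d_j - d_{\sigma(j)})$ and $\partial D_{j,\sigma(j)}/\partial p_m = (\delta_{jm} - \delta_{\sigma(j),m})/(d_j - d_{\sigma(j)})$, a short computation gives
\[
\frac{\partial \Rev_\sigma}{\partial p_m} \;=\; q_m \;-\; \sum_i q_i \sum_{j \in \path_\sigma(i)} (\delta_{jm} - \delta_{\sigma(j),m})\, G_i(D_{j,\sigma(j)}),
\]
where $G_i(v) := v f_i(v) + F_i(v)$.

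I would first reduce to the case that $\sigma$ and $\sigma'$ differ in exactly one coordinate $i_0$. If they differ in several coordinates, I can connect them by a chain of adjacent $\sigma$'s all containing $\pb$ in their regions: at $\pb$, both $\sigma(i)$ and $\sigma'(i)$ lie in the same $\arg\max_{j<i} D_{i,j}$ for every disagreement index $i$, so any hybrid of argmax-choices yields a valid $\tau$ with $\pb \in \Delta_\tau$. In the adjacent case, label $\sigma(i_0) = l'$ and $\sigma'(i_0) = l$ with $l < l' < i_0$ (the other ordering is symmetric). The boundary gives $D_{i_0,l} = D_{i_0,l'} =: c$, and realizing $D_{i_0,l}$ as a convex combination of $D_{i_0,l'}$ and $D_{l',l}$ via Lemma~\ref{lem:dijorder} immediately forces $D_{l',l} = c$ as well.

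The heart of the proof is a structural claim: every edge $(j,\sigma(j))$ on the symmetric difference of the two subpaths $\path_\sigma^{\mathrm{from}\,l}$ and $\path_\sigma^{\mathrm{from}\,l'}$ inside $\sigma$'s tree satisfies $D_{j,\sigma(j)} = c$. I plan to establish this via a sandwich chase that iterates Lemma~\ref{lem:dijorder} against the argmax conditions inherited from $\pb \in \Delta_\sigma \cap \Delta_{\sigma'}$. The prototypical step: to show $D_{l',\sigma(l')} = c$, note that the argmax at $l'$ forces $D_{l',\sigma(l')} \geq D_{l',l} = c$, the argmax at $i_0$ forces $D_{i_0,\sigma(l')} \leq D_{i_0,l'} = c$, and Lemma~\ref{lem:dijorder} sandwiches $D_{i_0,\sigma(l')}$ between $c$ and $D_{l',\sigma(l')}$, pinning both to $c$. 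Repeating this pattern, I march the equality down the $l'$-branch and, symmetrically, down the $l$-branch, toward the least common ancestor $r$ of $l$ and $l'$ in $\sigma$'s tree; auxiliary quantities $D_{i_0, j}$ and $D_{j, j'}$ along the way get pinned to $c$ as byproducts of the same chase.

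Given the structural claim, only indices $i \in I := \{i : i_0 \in \path_\sigma(i)\}$ contribute to $\nabla \Rev_\sigma - \nabla \Rev_{\sigma'}$, because the pre-$i_0$ portion of the path (everything above $i_0$) and the post-$r$ portion (the shared suffix from $r$ down to $0$) coincide under $\sigma$ and $\sigma'$. For each such $i$, the difference factors as $-q_i\,G_i(c)$ times a sum of $(\delta_{jm} - \delta_{\sigma(j),m})$ increments along the cycle formed by the $l'$-branch, the $l$-branch, and the two edges $(i_0,l),(i_0,l')$. The $l'$-branch increments telescope to $\delta_{l',m} - \delta_{r,m}$, the $l$-branch increments telescope to $\delta_{l,m} - \delta_{r,m}$, and together with the $\delta_{l,m} - \delta_{l',m}$ contributed by the two $i_0$-edges these three pieces cancel exactly. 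The main obstacle I anticipate is the structural claim itself: when $r$ lies strictly below both $l$ and $l'$, the sandwich chase must be carried out along both branches in tandem, and one must verify that argmax conditions at intermediate nodes supply the lower bounds needed to complement the upper bounds coming from the $i_0$-argmax via Lemma~\ref{lem:dijorder}.
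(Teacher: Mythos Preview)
Your proposal is correct and follows essentially the same route as the paper: reduce to $\sigma,\sigma'$ differing at a single index $i_0$ via a chain of hybrids, prove a structural claim that all relevant $D$-values along the two divergent subpaths equal a common constant $c$ by repeatedly sandwiching with Lemma~\ref{lem:dijorder} against the $\arg\max$ conditions, and then see that the gradient difference vanishes. Two small presentational differences worth noting: you dispatch the value equality $\Rev_\sigma(\pb)=\Rev_{\sigma'}(\pb)$ immediately from Lemma~\ref{lem:revenuedef}, whereas the paper re-derives it from the structural claim; and your explicit formula $\partial\Rev_\sigma/\partial p_m = q_m - \sum_i q_i\sum_j(\delta_{jm}-\delta_{\sigma(j),m})G_i(D_{j,\sigma(j)})$ packages the telescoping more cleanly than the paper's five-node local computation. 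The paper in fact proves the slightly stronger claim that \emph{every} $D_{j,j'}$ with $j,j'\in\path\cup\path'$ equals $c$, not just the edge-$D$'s you target, but as you implicitly recognize (``auxiliary quantities \ldots\ get pinned to $c$ as byproducts''), the interleaved induction naturally yields this anyway, and only the edge-$D$'s are needed for the gradient. Your anticipated obstacle is exactly right: the chase must alternate between the two branches, always advancing the higher of the two current frontier nodes, so that the $\arg\max$ at one branch's predecessor supplies the lower bound the other branch needs.
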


\begin{proof}
	We first argue that it is sufficient to prove Lemma \ref{lem:boundary} for the case where $\sigma$ and $\sigma'$ disagree 
in exactly one co-ordinate, i.e., there is some $\icrit$ such that $\sigma(\icrit) \neq \sigma'(\icrit)$, and $\forall j \neq \icrit$, $\sigma(j) = \sigma'(j)$. Suppose we have done that. Now consider any two $\sigma$ and $\sigma'$, and 
a sequence $\sigma = \sigma_1,\sigma_2,\ldots, \sigma_n = \sigma'$ such that 
for any $i$, $\sigma_i $ and $\sigma_{i+1}$ differ in exactly one co-ordinate, where $\sigma_i$ agrees with 
$\sigma$ in that co-ordinate and $\sigma_{i+1}$ agrees with $\sigma'$. 
The fact that $\pb \in \Delta_\sigma \cap \Delta_{\sigma'}$  implies that for all co-ordinates $j$ such that $\sigma(j) \neq \sigma'(j)$, 
both $\sigma(j)$ and $\sigma'(j) \in \arg \max _{j'<j}\{D_{j,j'} \}.$
Similarly, $\pb \in \Delta_{\sigma_i} \cap \Delta_{\sigma_{i+1}}$ requires the same condition, but only for the co-ordinate that they differ in, and therefore $\pb \in \cap_{i=1}^n \Delta_{\sigma_i}$. 
Since we know Lemma \ref{lem:boundary} holds when the two $\sigma$s differ in at most one co-ordinate, it now follows that 
$\Rev$ and $\nabla \Rev$ at $\pb$ are the same for all $\sigma_i$s and hence for $\sigma$ and $\sigma'$ as well. 

Now we prove Lemma \ref{lem:boundary} when $\sigma$ and $\sigma'$ differ at exactly one co-ordinate, $\icrit$. 
We consider the portions of the paths $\path_\sigma(i) $ and $\path_{\sigma'}(i) $ that are disjoint, and refer to these 
disjoint portions as simply $\path\subseteq \path_\sigma(i)$ and $\path'\subseteq\path_{\sigma'}(i) $. 
Both of these paths start at $\icrit$ and end at $\iend$. 
Note that once the two paths merge, they remain the same for the rest of the way.  
If the paths don't merge, then we let $\iend = 0$. 
The critical fact we use is that along these paths the $D$s are all the same, which is stated as the following lemma. 


\begin{claim}\label{lem:zero_on_path}
All $j, j' \in \path \cup \path'$ s.t. $j> j'$ have the same $D_{j,j'}$. 
\end{claim}
\begin{proof}
We prove the claim by induction, where we add one node at a time in the following order. 
We start the base case with $\icrit, \sigma(\icrit)$ and $\sigma'(\icrit)$. 
At any point let $j$ and $j'$ be the last points on $\path$ and $\path'$ that we have added so far. 
In the inductive step, if $j> j'$, we add $\sigma(j)$ and otherwise we add $\sigma'(j')$. 
We stop when all nodes in $\path \cup \path'$ have been added.

For the base case, let $j = \sigma(\icrit)$ and $j' = \sigma'(\icrit)$. Without loss of generality, assume that $j > j'$. By \autoref{lem:dijorder}, we get that $D_{\icrit,j'}$ is between $D_{\icrit,j}$ and $D_{j,j'}$. Since  $D_{\icrit,j} = D_{\icrit,j'}$, from the definition of $\icrit$, we get $D_{\icrit,j} = D_{j,j'} = D_{\icrit,j'}$.

For the inductive step, let $j \in \path$ and $j' \in \path' $ be the last points that we have added so far, and again without loss of generality $j > j'$. 
 Let $v = \sigma(j)$. If $v=j'$ we are done. There are two cases: $v > j'$ and $v<j'$. 
 In the former case, we have $D_{j,v} \geq D_{j,j'}$ from the definition of $v$. 
From \autoref{lem:dijorder}, $D_{j,j'}$ must be in between $D_{j,v}$ and $D_{v,j'}$, therefore $D_{j,j'}\geq D_{v,j'}$. 
Let $i'\in \path'$ be the predecessor of $j'$, i.e., $\sigma'(i') = j'$. 
Due to the order in which we added the nodes, it must be that $i'> j$.  
By definition, $D_{i',j'} \geq D_{i',v}$, and by \autoref{lem:dijorder} $D_{i',j'}$ must be in between 
$D_{i',v} $ and $D_{v,j'}$, therefore $D_{v,j'}\geq D_{i',j'}$.
By the inductive hypothesis, we have that $D_{i',j'} = D_{j,j'} $ and hence they both must be equal to $D_{v,j'}$. 

Now consider any $i\neq j'$ that we have already added. It must be that $i <v$, and hence $D_{i,j'}$ must be in between 
$D_{i,v}$ and $D_{v,j'}$, but from the argument in the previous paragraph and the inductive hypothesis, we have 
that $D_{i,j'} = D_{v,j'}$ , and hence they must be equal to $D_{i,v}$. This completes the induction for this case. 
%
%
 The latter case of $v < j'$ is identical.
\end{proof}

\paragraph{Continuing the proof of Lemma~\ref{lem:boundary}:} 	

To show that $\Rev_\sigma$s agree on the boundary, consider the difference $\Rev_\sigma(\pb) - \Rev_{\sigma'} (\pb)$. For all $i \leq \icrit$, or $i$ such that $\icrit \notin \path_{\sigma}(i)$, nothing changes, therefore all those terms cancel out. Moreover, even for $i$ such that $\icrit \in \path_{\sigma}(i)$, the only terms that don't cancel out are $j \in \path \cup \path'$. Therefore, we get:
\[ \textstyle
\Rev_\sigma(\pb) - \Rev_{\sigma'} (\pb) 
= \sum_{i \geq \icrit: \icrit \in \path_\sigma(i)} q_i \left( \sum_{j \in \path} p_{\sigma(j)}  \left( F_i(D_{j,\sigma(j)}) - F_i (D_{\sigma(j),\sigma^2(j)}) \right) \right. \]
\[  \textstyle
\left. \qquad - \sum_{j \in \path'} p_{\sigma'(j)}  \left( F_i(D_{j,\sigma'(j)}) - F_i (D_{\sigma'(j),(\sigma')^2(j)}) \right) \right),
\] 
which is zero by Claim~\ref{lem:zero_on_path}.

For the second part of the proof, we'll show that the gradient of $\Rev_\sigma - \Rev_{\sigma'}$ is zero. 
We only need to consider the partial derivatives w.r.t. $p_j$ for $j \in \path \cup \path'$ (modulo some corner cases). 
%
 Fix a $j \in \path$, and  consider the terms in $\frac{\partial (\Rev_\sigma - \Rev_{\sigma'})}{\partial p_j}$ corresponding to 
 some $i \geq \icrit$ such that $\icrit \in \path_{\sigma}(i)$, in the outer summation. 
 Let the path $\path_{\sigma}(i)$ be such that $a \in \path_{\sigma}(i)$, $b = \sigma(a)$, $j = \sigma(b)$, $c = \sigma(j)$ and $d = \sigma(c)$. 
\[ i \rightarrow \ldots \rightarrow \icrit \rightarrow \ldots \rightarrow a \rightarrow b \rightarrow j \rightarrow c \rightarrow d \rightarrow \ldots \]
Then the terms under consideration are 
\begin{align*}
& \frac{\partial}{\partial p_j} q_i \left(  p_{b}  \left( F_i(D_{a,b}) - F_i(D_{b,j}) \right) +  p_{j}  \left( F_i(D_{b,j}) - F_i(D_{j,c}) \right) + p_{c}  \left( F_i(D_{j,c}) - F_i(D_{c,d}) \right)  \right) = \\
&= q_i \left( \frac{p_b}{d_b-d_j} f_i(D_{b,j}) - \frac{p_j}{d_b-d_j} f_i(D_{b,j}) + F_i(D_{b,j}) - F_i(D_{j,c})- \frac{p_j}{d_j-d_c} f_i(D_{j,c}) + \frac{p_c}{d_j-d_c} f_i(D_{j,c}) \right) \\
&= q_i \left( D_{b,j} f_i(D_{b,j}) - D_{j,c} f_i(D_{j,c}) + F_i(D_{b,j}) - F_i(D_{j,c}) \right).
\end{align*}
By Claim~\ref{lem:zero_on_path}, $D_{b,j} = D_{j,c}$, and therefore these terms are zero. The cases when $i = \icrit$, or $\icrit = a,b,j$, or $c,d=0$, or $j \in \path_{\sigma'}(i)$ are identical. \end{proof}

We are now ready to prove the main theorem of this section, which is simply arguing how this agreement of gradients implies that $\Rev$ is concave everywhere. 
\begin{proof}[Proof of Theorem~\ref{thm:concavity}]
	Consider any two prices $\pb_1$ and $\pb_2$, and the line segment joining the two. We will argue that $\Rev$ is concave along this line segment, which then implies the Theorem. 
	From  Lemmas~\ref{lem:revenuedef} and~\ref{lem:revsigmaconcave}, we have that this line segment is itself divided into many intervals, and within each interval, 
	$\Rev$ is a concave function. Further, from Lemma~\ref{lem:boundary}, we have that these concave functions agree at the intersections of the intervals, and the 
	gradients agree too. Thus $\Rev$ is smooth, and the derivative along this line is monotone. This implies that $\Rev$ is concave along the line. 
\end{proof}

\bibliographystyle{plainnat}
\bibliography{references}

\appendix
\section{Deferred Proofs}\label{app:appendix}
The proof below contains the calculations for Example~\ref{ex:regularity comparisons}.
\begin{proof}
We first show that the constant elasticity distribution with cumulative density $F(v) = 1 - (v/a)^{1/\epsilon}$ is DMR.  Recall that DMR is equivalent to concavity of the revenue function.  To verify concavity, we calculate the second derivate of the revenue function and show that it is negative.
\begin{align*}
R'(v) &= (\frac{v}{a})^{1/\epsilon} + \frac{v}{a\epsilon}(\frac{v}{a})^{1/\epsilon -1}.\\
R''(v) &= (\frac{v}{a})^{1/\epsilon -1}\frac{2}{a\epsilon} + (\frac{v}{a})^{1/\epsilon -2}\frac{v}{a^2\epsilon}(1/\epsilon -1) \\
& = (\frac{v}{a})^{1/\epsilon -2}(\frac{2v}{a^2\epsilon} + \frac{v}{a^2\epsilon}(1/\epsilon - 1)) \\
& = (\frac{v}{a})^{1/\epsilon -2}\frac{v}{a^2\epsilon}(1+1/\epsilon) \leq 0.
\end{align*}

Now consider regularity.  Note that the probability density function $f(v) = \frac{-1}{\epsilon a} (v/a)^{1/\epsilon - 1}$.  Recall that a distribution is regular if the function $\phi(v)$ is monotone non-decreasing in $v$.
\begin{align*}
\phi(v) &= v - \frac{1-F(v)}{f(v)}\\
&= v - \frac{(v/a)^{1/\epsilon}}{\frac{-1}{a\epsilon}(v/a)^{1/\epsilon-1}}\\
&= v - \frac{v/a}{-1/(a\epsilon)} = v (1+\epsilon),
\end{align*}

which is monotone \emph{decreasing} since by assumption $\epsilon < -1$.

We finally argue that the exponential distribution, defined as $F(v) = 1 - e^{-v}$ is not DMR but is regular.  The revenue function is $R(v) = ve^{-v}$, its first derivative is $R'(v) = (1-v)e^{-v}$, and its second derivative is $(v-2)e^{-v}$, which is positive for $v \geq 2$, violating concavity.  However, as commonly known, this distribution is regular since $\phi(v) = v - \frac{1-F(v)}{f(v)} = v- \frac{e^{-v}}{e^{-v}} = v -1$ is monotone non-decreasing in $v$.
\end{proof}

\expost*

\begin{proof}
	Consider an EIC and EIR mechanism.  First note that we can assume that for each type $(v,d)$, the randomized allocation $A(v,d)$ does not assign a number of units more than $d$.  If this is not true, replace any assignment of more than $d$ units with the assignment of $d$ units.  Note that this change does not  change the utility of truthful reporting, and cannot improve utility of non-truthful reporting.  Therefore the resulting mechanism is EIC and EIR.  Now consider a type $(v,d)$, its \emph{realized} allocation $A(v,d)$, and its expected payment $p(v,d)$, and construct a randomized payment $\tilde{p}(v,d)$ as follows
	\begin{align*}
	\tilde{p}(v,d) = \frac{p(v,d) A(v,d)}{\E \left[ A(v,d)\right]}.
	\end{align*}
	\noindent Note that the expected payment of the type stays the same,
	\begin{align*}
	 \E \left[ \tilde{p}(v,d)\right] = p(v,d) \frac{\E \left[ A(v,d)\right]}{\E \left[ A(v,d)\right]} = p(v,d).
	\end{align*}  
	\noindent As a result, the modified mechanism stays EIC.  In addition, the ex-post utility of the type from the realized allocation of $A(v,d)$ units is 
	\begin{align*}
	vA(v,d)- \frac{p(v,d)A(v,d)}{\E \left[ A(v,d)\right]} ,
	\end{align*}
	which is non-negative if and only if 
	\begin{align*}
	v \E \left[ A(v,d)\right] - p(v,d) \geq 0,
	\end{align*}
	which hold by EIR.
\end{proof}

\Support*


\begin{proof}
	Let $x^i_m= \Prob [A(t_i) \geq m]$ denote the probability that type $t_i$ is allocated $m$ or more units.
	Let $h_i$ be such that $x^i_{h_i} > 0$ and $x^i_{h_i+1} = 0$. Set $w\left(t_i\right) = \frac{\sum_{m=1}^{h_i} x^i_m}{d_i}$, and 
	consider an alternate allocation given by a random variable $B(t_i)$  
	which is $d_i $ with probability $w(t_i)$ and is 0 otherwise.  
	Then, $\Prob [B (t_i) \geq m] = y^i_m = w\left(t_i\right)$ for all $m \leq d_i$. 
	The utility of $t_i$ when reporting $t_i$ remains unchanged under this alternate allocation: 
	\[ \ut{t_i}{t_i} = v_i  \sum_{m=1}^{d_i} x^i_m - p\left( t_i \right) = v_i d_i w\left(t_i\right) - p\left( t_i \right) = v_i  \sum_{m=1}^{d_i} y^i_m - p\left( t_i \right)\]
	and so does $\ut{t_j}{t_i}$ for all $t_j$ with $d_j \geq d_i$. When $d_j < d_i$, it is easy to check that $\sum_{m=1}^{d_j} x^i_m \leq \sum_{m=1}^{d_j} y^i_m = d_j w\left(t_i\right)$, since $x^i_m \geq x^i_{m+1}$ for all $k$, and thus the utility of $t_j$ when reporting $t_i$ can only decrease.
	Thus, when changing the allocation from $A$ to $B$, the EIC constraints are still satisfied, and total revenue remains unchanged.
\end{proof}

\utilandX*

\begin{proof}
	$\displaystyle u(v,d_i \rightarrow v \frac{d_{i}}{d_j}, d_{j} ) = v d_i w(v \frac{d_{i}}{d_{j}}, d_{j}) - p(v \frac{d_{i}}{d_{j}}, d_{j}) $\\
	$\displaystyle = (v \frac{d_i}{d_{j}}) d_{j}  w(v \frac{d_{i}}{d_{j}}, d_{j}) - p(v \frac{d_{i}}{d_{j}}, d_{j}) = u(v \frac{d_{i}}{d_{j}}, d_{j} \rightarrow v \frac{d_{i}}{d_{j}}, d_{j}  ) = d_{j} U_{d_{j}} (v \frac{d_i}{d_{j}})$.\\
	The second part is identical.
\end{proof}

\localtoglobal*

\begin{proof}
We will show that for all pairs of types $t_i =(v_i,d_i)$ and $t_j = (v_j,d_j)$, with $d_i \geq d_j + 1$, $t_i$ does not want to report $t_j$ and vice versa:
\begin{itemize}
\item $\displaystyle \ut{t_i}{t_i} \geq \ut{v_i,d_i}{v_i,d_i - 1} = \ut{v_i,d_i - 1}{v_i,d_i - 1}$ \\
$\qquad \geq \ut{v_i,d_i-1}{v_i,d_i - 2} = \ut{v_i,d_i-2}{v_i,d_i - 2} $\\
$\qquad \dots  $\\
$\qquad \geq \ut{v_i,d_j}{v_i,d_j} \geq  \ut{v_i,d_j}{v_j, d_j} = \ut{v_i,d_i}{v_j, d_j} = \ut{t_i}{t_j}$

\item $\ut{t_j}{t_j} \geq \ut{t_j}{v_j \frac{d_j}{d_j + 1},d_j+1} = v_j d_j \w{v_j \frac{d_j}{d_j + 1},d_j+1} - \p{v_j \frac{d_j}{d_j + 1},d_j+1}$ \\
$= v_j \frac{d_j}{d_j+1} (d_j+1) \w{v_j \frac{d_j}{d_j + 1},d_j+1} - \p{v_j \frac{d_j}{d_j + 1},d_j+1}$ \\
$= \ut{v_j \frac{d_j}{d_j + 1},d_j+1}{v_j \frac{d_j}{d_j + 1},d_j+1}$

Applying this argument repeatedly gives us: $\ut{t_j}{t_j} \geq \ut{v_j \frac{d_j}{d_i},d_i}{v_j \frac{d_j}{d_i},d_i}.$ Using truthfulness for a fixed $d$, we have that the RHS is at least $\ut{v_j \frac{d_j}{d_i},d_i}{v_i,d_i} = v_j \frac{d_j}{d_i} d_i \w{v_j \frac{d_j}{d_i},d_i} - \p{v_j \frac{d_j}{d_i},d_i}$, which is just $\ut{t_j}{t_i}$.
\end{itemize}
\end{proof}

\section{Detailed Characterization for $\total = 2$}
\label{app:detailed d2}

We first complete the case analysis that shows that the optimal mechanism is deterministic for $\total = 2$.

We have the following cases for $v_1$ and $v_2$:
\begin{description}
	\item {$v_1 \leq v_2$:}
	Thus, for all $v \leq v_1 \leq v_2$, we have that $ U_{d_1}(v) = \frac{d_2}{d_1} U_{d_2}\left( v \frac{d_1}{d_2} \right) = U_{d_1} \left( v \frac{d_1}{d_2} \right),$ which implies that $U_{d_1}(v) = 0$, and therefore $U_{d_2}(v) = 0$. For all $v \geq v_1$, we have that $U_{d_1}'(v)=1$, i.e. $\w[d_1]{v} = 1$; 
	this corresponds to a posted price of $d_1 v_1$ for a bundle of $d_1$ units. For $v \in \left[ v_1 , v_2 \right]$, we have $ U_{d_2}(v) = \frac{d_1}{d_2} U_{d_1}(v) = \frac{d_1}{d_2}(v-v_1)$. This implies that the allocation function $\w[d_2]{v}$ is equal to $\frac{d_1}{d_2}$; for a price of $d_1 v_1$, we offer a bundle of $d_2$ units with probability $\frac{d_1}{d_2}$. For $v \geq v_2$, we have a posted price of $d_2 v_2 - (v_2 - v_1)d_1$ for a bundle of $d_2$ units. The same allocation rule can be induced by just two menu units (and no randomization): $d_1$ units cost $v_1 d_1$ and $d_2$ units cost $v_2 d_2 - (v_2 - v_1)d_1$.
	\item {$v_2 \leq v_1$ and $v_1 d_1 \leq v_2 d_2$:}
	As before, for all $v \leq v_2$, $U_{d_2}(v) = U_{d_1}(v) = 0$, and the bundle of $d_2$ units has a posted price of $v_2 d_2$. For $v \in \left[ v_2 , v_1 \right]$, we have $ U_{d_1}(v) = \frac{d_2}{d_1} U_{d_2} \left( v \frac{d_1}{d_2} \right) \leq \frac{d_2}{d_1} U_{d_2} \left( v_2 \right) = 0$. For $v \geq v_1$, $U'_{d_1}(v) = 1$; this is a posted price of $d_1 v_1$ for $d_1$ units.
	\item {$v_2 \leq v_1$ and $v_1 d_1 > v_2 d_2$:} 
	Once again, for all $v \leq v_2$, $U_{d_2}(v) = U_{d_1}(v) = 0$, and the bundle of $d_2$ units has a posted price of $d_2 v_2$. For $v \in \left[ v_2 , \frac{d_2}{d_1} v_2 \right]$, we have $U_{d_1}(v) = \frac{d_2}{d_1} U_{d_2} \left( v \frac{d_1}{d_2} \right) = 0$. For $v \in \left[ \frac{d_2}{d_1} v_2 , v_1 \right]$, $U'_{d_1}(v) = 1$; offer a bundle of $d_1$ units for a price of  $d_1 \frac{d_2}{d_1} v_2 = d_2 v_2$. This corresponds to selling only the $d_2$ bundle for a price of $v_2d_2$. 
\end{description}

We now characterize the optimal thresholds $v_1$ and $v_2$.  Let $v_1$ and $v_2$ be the values after which $(.,d_1)$ and $(.,d_2)$ type agents are allocated $d_1$ and $d_2$ units respectively. Then, the optimal mechanism posts a price for $d_1$ units and a price for $d_2$ units that is either: (1) $v_1 d_1$ and $v_2 d_2 - (v_2 - v_1)d_1$, (2) $v_1 d_1$ and $v_2 d_2$, or (3) $v_2 d_2$ for both. This is equivalent to the maximum of:

\begin{itemize}
\item $\max v_1 d_1 \left( 2 - F_1(v_1) - F_2(v_1) \right) + v_2 (d_2 - d_1) \left( 1 - F_2(v_2) \right)$\\
subject to $v_2 \geq v_1$.

\item $\max v_1 d_1 \left( 1 - F_1(v_1) \right) + v_2 d_2 \left( 1 - F_2(v_2) \right)$\\
subject to $v_1 \geq v_2$ and $v_1 \leq \frac{d_2}{d_1} v_2$.

\item $\max v_2 d_2 \left( 2 - F_2(v_2) - F_1(v_2 \frac{d_2}{d_1}) \right)$.

\end{itemize}

Let $\vstar{1}$ and $\vstar{2}$ be the optimal choices for $v_1$ and $v_2$. Also, let $\hat{v}_1$ and $\hat{v}_2$ be the monopoly pricing solutions, i.e. $\hat{v_i} = arg\max v d_i \left( 1 - F_i(v) \right)$. 

Then, we have the following options for $\vstar{1}$ and $\vstar{2}$: 
\begin{enumerate}
\item $\vstar{1} = \hat{v}_1$ and $\vstar{2} = \hat{v}_2$ (unconstrained version of the second bullet)
\item $\vstar{1} = arg\max v d_1 \left( 2 - F_1(v) - F_2(v) \right)$ and $\vstar{2} = \hat{v}_2$ (unconstrained version of the first bullet)
\item $\vstar{1} = \vstar{2} = arg\max v \left( d_1 (1 - F_1(v)) + d_2 (1-F_2(v)) \right)$ 
\item $\frac{d_1}{d_2} \vstar{1} = \vstar{2} = arg\max v d_2 \left( 2 - F_2(v) - F_1(v \frac{d_2}{d_1}) \right)$
\end{enumerate}

This corresponds to the following: compute $\hat{v}_1$ and $\hat{v}_2$. If $\frac{d_2}{d_1} \hat{v}_2 \geq \hat{v}_1 \geq \hat{v}_2$ we're done. Otherwise, compute $arg\max v d_1 \left( 2 - F_1(v) - F_2(v) \right)$. If it is at most $\hat{v}_2$, then pick the best option out of 2,3 and 4. If not, pick the best out of 3 and 4.

\begin{itemize}
\item $\max v_1 d_1 \left( 2 - F_1(v_1) - F_2(v_1) \right) + v_2 (d_2 - d_1) \left( 1 - F_2(v_2) \right)$\\
subject to $v_2 \geq v_1$.

\item $\max v_1 d_1 \left( 1 - F_1(v_1) \right) + v_2 d_2 \left( 1 - F_2(v_2) \right)$\\
subject to $v_1 \geq v_2$ and $v_1 \leq \frac{d_2}{d_1} v_2$.

\item $\max v_2 d_2 \left( 2 - F_2(v_2) - F_1(v_2 \frac{d_2}{d_1}) \right)$
\end{itemize}

Let $\vstar{1}$ and $\vstar{2}$ be the optimal choices for $v_1$ and $v_2$. Also, let $\hat{v}_1$ and $\hat{v}_2$ be the monopoly pricing solutions, i.e. $\hat{v_i} = arg\max v d_i \left( 1 - F_i(v) \right)$. The following procedure gives the optimal $\vstar{1}$ and $\vstar{2}$: compute $\hat{v}_1$ and $\hat{v}_2$, and check whether they satisfy the IC constraints. If they do, then we are done. If they do not, it must be that either $\hat{v_1} < \hat{v_2}$, or $\hat{v_1} > \frac{d_2}{d_1} \hat{v_2}$.

In the former case, compute the \textit{best per unit price} $q$, i.e. a price $q$ such that $d_1$ units cost $qd_1$ and $d_2$ units cost $qd_2$. This corresponds to the solution of the first bullet. 

In the latter case, compute the \textit{best bundle price}, i.e. the best price $p$ that is going to be the same for $d_1$ and $d_2$. This corresponds to the solution of the third bullet. The best of $p$ and $q$ the two is optimal, and given that, $\vstar{1}$ and $\vstar{2}$ can be easily calculated.

Then, we have the following options for $\vstar{1}$ and $\vstar{2}$: 
\begin{enumerate}
\item $\vstar{1} = \hat{v}_1$ and $\vstar{2} = \hat{v}_2$ (unconstrained version of the second bullet)
\item $\vstar{1} = arg\max v d_1 \left( 2 - F_1(v) - F_2(v) \right)$ and $\vstar{2} = \hat{v}_2$ (unconstrained version of the first bullet)
\item $\vstar{1} = \vstar{2} = arg\max v \left( d_1 (1 - F_1(v)) + d_2 (1-F_2(v)) \right)$ 
\item $\frac{d_1}{d_2} \vstar{1} = \vstar{2} = arg\max v d_2 \left( 2 - F_2(v) - F_1(v \frac{d_2}{d_1}) \right)$
\end{enumerate}

This corresponds to the following: compute $\hat{v}_1$ and $\hat{v}_2$. If $\frac{d_2}{d_1} \hat{v}_2 \geq \hat{v}_1 \geq \hat{v}_2$ we're done. Otherwise, compute $arg\max v d_1 \left( 2 - F_1(v) - F_2(v) \right)$. If it is at most $\hat{v}_2$, then pick the best option out of 2,3 and 4. If not, pick the best out of 3 and 4.

\end{document}